\documentclass[11pt,twoside]{article}

\usepackage{float}
\usepackage{graphicx}
\usepackage{epstopdf}
\usepackage{multirow}
\usepackage{threeparttable}
\usepackage{xcolor}
\usepackage{rotating}
\usepackage{booktabs}
\usepackage{pifont}
\usepackage{enumitem}
\usepackage{amsmath,amssymb,amsthm,mathtools}
\usepackage{stmaryrd}
\usepackage{bbm}
\usepackage{array}
\usepackage{tabularx}
\usepackage{microtype}
\usepackage{cite}
\usepackage[colorlinks,linkcolor=red,citecolor=blue]{hyperref}
\allowdisplaybreaks[4]

\newcommand{\F}{\mathbb F}
\newcommand{\supp}{{\rm supp}}
\newcommand{\rank}{{\rm rank}}
\newcommand{\wt}{{\rm wt}}
\newcommand{\Tr}{{\rm Tr}}
\newcommand{\id}{{\rm id}}
\newcommand{\row}{{\rm row}}

\newcommand{\ket}[1]{\lvert #1\rangle}
\newcommand{\bra}[1]{\langle #1\rvert}
\newcommand{\calH}{\mathcal H}
\newcommand{\C}{\mathcal{C}}
\newcommand{\calM}{\mathcal M}
\newcommand{\Rec}{{\rm Rec}}

\newcommand{\Hull}{\rm Hull}
\newcommand{\vect}[1]{\mathbf{#1}}
\newcommand{\mat}[1]{\mathbf{#1}}
\newcommand\dsb[1]{\llbracket #1 \rrbracket}

\usepackage[
  a4paper,
  top=0.85in,
  bottom=0.9in,
  left=0.95in,
  right=0.95in,
  headheight=12pt,
  headsep=12pt,
  footskip=24pt
]{geometry}
\usepackage[colorlinks, linkcolor=red, citecolor=blue]{hyperref}

\makeatletter
\renewcommand\title[1]{\gdef\@title{\reset@font\Large\bfseries #1}}
\renewcommand\section{\@startsection {section}{1}{\z@}
                                   {-3.5ex \@plus -1ex \@minus -.2ex}
                                   {2.3ex \@plus.2ex}
                                   {\normalfont\large\bfseries}}
\renewcommand\subsection{\@startsection{subsection}{2}{\z@}
                                     {-3ex\@plus -1ex \@minus -.2ex}
                                     {1.5ex \@plus .2ex}
{\normalfont\normalsize\bfseries}}
\renewcommand\subsubsection{\@startsection{subsubsection}{3}{\z@}
                                     {-2.5ex\@plus -1ex \@minus -.2ex}
                                     {1.5ex \@plus .2ex}
{\normalfont\normalsize\bfseries}}
\def\@runningauthor{}\newcommand{\runningauthor}[1]{\def\runningauthor{#1}}
\def\@runningtitle{}\newcommand{\runningtitle}[1]{\def\runningtitle{#1}}
\renewcommand{\ps@plain}{
\renewcommand{\@evenhead}{\footnotesize\scshape \hfill\runningauthor\hfill}
\renewcommand{\@oddhead}{\footnotesize\scshape \hfill\runningtitle\hfill}}
\g@addto@macro\bfseries{\boldmath}
\makeatother

\theoremstyle{plain}
\newtheorem{theorem}{Theorem}

\newtheorem{lemma}{Lemma}
\newtheorem{corollary}{Corollary}

\theoremstyle{definition}
\newtheorem{definition}{Definition}
\newtheorem{example}{Example}

\newtheorem{problem}{Problem}

\theoremstyle{remark}
\newtheorem{remark}{Remark}

\begin{document}
\begin{sloppypar}

\title{Entanglement-Assisted Quantum Locally Recoverable Codes: 
Characterizations, Bounds, and Constructions
\thanks{The work
of Yang Li and San Ling was supported by the Nanyang Technological University
under Research Grant 04INS000047C230GRT01.  The work of Gaojun Luo was
supported by the National Natural Science Foundation of China under Grant
12401690. {\em (Corresponding author: Gaojun Luo)}}}

\author{Yang Li, San Ling, Zhenliang Lu, Gaojun Luo, and Shixin Zhu
\thanks{
Yang Li is with the School of Physical and Mathematical Sciences, Nanyang
Technological University, 21 Nanyang Link, Singapore 637371, Singapore
(e-mail: yanglimath@163.com).
San Ling is with the School of Physical and Mathematical Sciences, Nanyang
Technological University, 21 Nanyang Link, Singapore 637371, Singapore
(e-mail: lingsan@ntu.edu.sg).
He is also with VinUniversity, Vinhomes Ocean Park, Hanoi 100000, Vietnam
(e-mail: ling.s@vinuni.edu.vn).
Zhenliang Lu is with the Department of Computer Science, City University of
Hong Kong, Hong Kong, China (e-mail: zhenliang.lu@cityu.edu.hk).
Gaojun Luo is with the School of Mathematics, Nanjing University of Aeronautics
and Astronautics, Nanjing 210016, China (e-mail: gaojun\_luo@nuaa.edu.cn).
He is the corresponding author.
Shixin Zhu is with the School of Mathematics, Hefei University of Technology,
Hefei 230601, China (e-mail: zhushixin@hfut.edu.cn).
}}

\date{}
\maketitle

\begin{abstract}
Quantum locally recoverable codes (qLRCs) allow a single qudit erasure to be corrected  
by accessing only a small number of other qudits. Standard CSS and Hermitian constructions, 
however, impose dual-containing or self-orthogonal constraints on the underlying classical codes, 
thereby restricting the well-structured classical LRCs (cLRCs) that can be used to construct qLRCs. 
To relax these constraints, we introduce entanglement-assisted quantum locally recoverable codes 
(EAQLRCs) by assuming that halves of the pre-shared maximally entangled pairs are noiseless. 
We characterize sufficient support conditions on extended stabilizers under which 
entanglement-assisted stabilizer codes have locality $r$ and derive a CSS-like construction 
from two classical codes without imposing the ordinary dual-containing condition. 
We further establish an upper bound on locality and a Singleton-like bound 
for arbitrary CSS-like EAQLRCs, and characterize the pure codes attaining equality in the latter bound. 
These results yield a general framework for constructing optimal pure EAQLRCs from pairs of cLRCs. 
Applying this framework to $\ell$-intersection pairs of MDS codes and block parity-check matrices, 
we obtain two families of optimal pure CSS-like EAQLRCs with flexible parameters and nontrivial localities. 
To the best of our knowledge, these represent the first explicit families of EAQLRCs.
\end{abstract}

\noindent{\bf Keywords:} Locally recoverable code, entanglement-assisted quantum code,
CSS construction, $\ell$-intersection pair, block parity-check matrix

\noindent{\bf AMS Classification (MSC 2020)}: 94B05, 94B35, 81P45

\section{Introduction and motivation}

Throughout this paper, $q$ is a prime power, $\F_q$ is the finite field with $q$ elements, and $[m]=\{1,2,\ldots,m\}$ for every nonnegative integer $m$, with $[0]=\emptyset$.
An $[n,k,d]_q$ {\em linear code} $\C$ is a $k$-dimensional linear subspace of $\F_q^n$ with minimum distance $d=d(\C)$.
Given an $[n,k]_q$ linear code $\C$, its {\em Euclidean dual} is the $[n,n-k]_q$ linear code
$$\C^{\perp}=\left\{{\bf y}=(y_1,y_2,\ldots,y_n)\in\F_q^n:~\sum_{i=1}^n x_i y_i=0,\ \forall\,{\bf x}=(x_1,x_2,\ldots,x_n)\in\C\right\},$$
and its \emph{Euclidean hull} is the intersection code $\Hull(\C)=\C\cap\C^\perp$.
A linear code $\C$ is \emph{Euclidean self-orthogonal} if $\Hull(\C)=\C$, and \emph{Euclidean dual-containing} if $\Hull(\C)=\C^\perp$.

\subsection{Classical and quantum locally recoverable codes}

In classical and quantum error correction, an {\em erasure} means an error whose position is known \cite{GrasslBethPellizzari1997}.
Erasure errors are common failure modes in modern distributed storage systems (DSSs), where they correspond to failures of storage nodes.
In particular, the failure of a single storage node occurs more frequently than the simultaneous catastrophic failure of multiple nodes in large DSSs \cite{SharmaRamkumarTamo2024}.
To repair a single failed node efficiently, Gopalan $et~al.$ \cite{GopalanHuangSimitciYekhanin2012} introduced the notion of {\em classical locally recoverable codes (cLRCs)}.
Specifically, for each $i\in[n]$ and each codeword ${\bf c}=(c_1,c_2,\ldots,c_n)\in\C$, if the $i^{\rm th}$ symbol $c_i$ can be recovered by accessing 
at most $r$ other code symbols of ${\bf c}$, then the $[n,k,d]_q$ linear code $\C$ is called an 
{\em $(n,k,d,q;r)$-cLRC}, or a {\em cLRC with locality $r$}.
During the past decade, fundamental bounds and many optimal constructions of cLRCs have been obtained  
in \cite{GopalanHuangSimitciYekhanin2012,PrakashKamathLalithaKumar2012,PapailiopoulosDimakis2014,TamoBarg2014,RawatPapailiopoulosDimakisVishwanath2014,CadambeMazumdar2015,Jin2019LongLRC,ChenHuang2019OptimalLRC,XingYuan2022LongLRC} and the references therein.

With the rapid development of quantum computing and quantum storage, 
large-scale quantum data storage may become feasible in the future.
This motivates a quantum counterpart of cLRCs, namely {\em quantum locally recoverable codes (qLRCs)}.
Following the standard terminology of quantum error correction \cite{KnillLaflamme1997,LidarBrun2023QuantumErrorCorrection}, an $\dsb{n,\kappa,\delta}_q$ {\em quantum error-correcting code (QECC)} $\mathcal Q$ is a $q^\kappa$-dimensional subspace of $(\mathbb C^q)^{\otimes n}$ with minimum distance $\delta$.
It can correct every set of qudit erasures affecting at most $\delta-1$ known positions.
Golowich and Guruswami \cite{GolowichGuruswami2023} introduced qLRCs with locality $r$.
Informally, an $\dsb{n,\kappa,\delta}_q$ QECC $\mathcal Q$ has locality $r$ if, 
whenever any single qudit of a state $\ket{\varphi}\in\mathcal Q$ is erased, 
the state can be recovered by a recovery channel that accesses at most $r$ other qudits.
Thus, instead of invoking a global recovery procedure, a single qudit erasure can be corrected using a small quantum subsystem.
By restricting recovery to a small subsystem, qLRCs can reduce the number of qudits accessed 
and the number of measurements required to correct a single qudit erasure \cite{DZ2020,M2025}.
Golowich and Guruswami \cite{GolowichGuruswami2023} also established a connection between 
qLRCs and quantum low-density parity-check (qLDPC) codes, 
highlighting that qLRCs serve as a foundational step toward studying 
stronger locality properties in qLDPC codes.

Motivated by these applications and theoretical connections, qLRCs have attracted increasing attention.
Golowich and Guruswami \cite{GolowichGuruswami2023} and Galindo $et~al.$ \cite{GalindoHernandoMartinCruzMatsumoto2024} 
established useful links between the localities of QECCs derived from the CSS and Hermitian constructions \cite{KetkarKlappeneckerKumarSarvepalli2006} 
and those of the Euclidean and Hermitian dual-containing classical codes, respectively. 
These links give CSS-like and Hermitian-like construction criteria for obtaining qLRCs 
from cLRCs. 
The definition of qLRCs has also been generalized to qLRCs with 
$(r,\delta)$-locality \cite{GalindoHernandoMartinCruzMatsumoto2024}, 
hierarchical locality \cite{GuruswamiKshirsagarTrivedi2026}, 
and asymmetric locality \cite{LiJinXing2026Asymmetric}. 
Furthermore, the following advances have been made in the study of qLRCs.  
\begin{itemize}
  \item On the bound side, Golowich and Guruswami first established a Singleton-like bound for general qLRCs \cite{GolowichGuruswami2023}.
Subsequent work derived bounds for qLRCs from the CSS and Hermitian constructions, 
including Singleton-like, Griesmer-like, Cadambe-Mazumdar-like, Plotkin-like, 
and sphere-packing-like bounds, with separate refinements for pure codes \cite{LiLiLuoLing2025PureQLRC,GalindoHernandoMartinCruzMatsumoto2024} 
and general codes \cite{LuoChenEzermanLing2025,LiLiLaoLuoLing2026TDesign}.

  \item On the construction side, the CSS-like construction has been applied to obtain qLRCs from variations of hypergraph products \cite{BuGuLi2025}, Tamo-Barg codes \cite{GolowichGuruswami2023}, good polynomials \cite{SharmaRamkumarTamo2024}, parity-check matrices and cyclic codes \cite{LuoChenEzermanLing2025}, and trace codes \cite{XieZhuSun2025OptimalQLRC}.
The Hermitian-like construction has been developed using dual-containing MDS codes \cite{GalindoHernandoMartinCruzMatsumoto2024}, Hamming, generalized Reed-Muller, and Solomon-Stiffler codes \cite{LiLiLuoLing2025PureQLRC}, 
and codes supporting combinatorial designs \cite{LiLiLaoLuoLing2026TDesign}.
Further constructions use matrix-product codes \cite{GalindoHernandoMunueraRuano2025MatrixProduct,CaoZhou2026MatrixProduct,CaoZhou2026NSC,CaoZhou2026MP,ZhouCao2026Classical}, BCH and homothetic BCH codes \cite{GalindoHernandoMatsumoto2026BCH}, and impure stabilizer codes \cite{GalindoHernandoMartinCruzMatsumoto2026Impure}.

  \item On the optimality side, the available bounds are not generally attained for arbitrary parameters.
Most explicit families of qLRCs currently proved optimal are pure and 
are with respect to the Singleton-like bound 
specialized to the CSS-like or Hermitian-like construction 
\cite{CaoZhou2026NSC,GalindoHernandoMunueraRuano2025MatrixProduct,GalindoHernandoMartinCruzMatsumoto2024,GalindoHernandoMatsumoto2026BCH,CaoZhou2026MP,LuoChenEzermanLing2025,XieZhuSun2025OptimalQLRC,LiLiLaoLuoLing2026TDesign,LiLiLuoLing2025PureQLRC,CaoZhou2026MatrixProduct,ZhouCao2026Classical}.
\end{itemize}

\subsection{Motivation for introducing locality into entanglement-assisted QECCs}

Although the locality of a QECC is defined in terms of local recovery channels, 
the standard algebraic constructions of qLRCs cannot be applied to arbitrary cLRCs. 
For a CSS-like construction from two classical codes $\C_X,\C_Z\subseteq\F_q^n$ \cite{GolowichGuruswami2023}, one requires the dual-containing condition $\C_X^\perp\subseteq\C_Z$.
Similarly, the Hermitian-like construction \cite{GalindoHernandoMartinCruzMatsumoto2024} requires an appropriate Hermitian self-orthogonal or dual-containing cLRC.
These conditions guarantee that the corresponding quantum stabilizers commute, 
but they also restrict the cLRCs that can be used directly.
Indeed, cLRCs are usually designed to optimize locality, minimum distance, length, or dimension, 
and an optimal cLRC need not satisfy any dual-containing condition.
Consequently, requiring the dual-containing property excludes many optimal and well-known 
constructions of cLRCs and narrows the attainable parameter ranges of qLRCs. 

Entanglement-assisted QECCs (EAQECCs), introduced by Brun $et~al.$ \cite{BrunDevetakHsieh2006}, 
use pre-shared entanglement to remove the dual-containing constraint. 
In an EAQECC, Alice and Bob share $c$ maximally entangled qudit pairs in advance, 
with Bob's halves assumed to remain reliable. 
This shared entanglement permits noncommuting stabilizer checks, 
thereby replacing the self-orthogonality or dual-containing property 
by the consumption of maximally entangled pairs \cite{BrunDevetakHsieh2006,HsiehDevetakBrun2007}. 
Consequently, pairs of classical codes that do not satisfy dual-containing conditions 
can be directly employed to construct EAQECCs \cite{GalindoHernandoMatsumotoRuano2019}. 
Assume that a qudit has Hilbert space $\mathbb C^q$ with computational basis $\{\ket{x}:~x\in\F_q\}$.
For an ordered register $A=(A_1,A_2,\ldots,A_t)$ of qudits, write 
$\calH_A=\bigotimes_{j=1}^t\calH_{A_j}\cong(\mathbb C^q)^{\otimes t}$
for its joint Hilbert space, with $\calH_A=\mathbb C$ when $|A|=0$.
Let $Q_{[n]}=(Q_1,Q_2,\ldots,Q_n)$ be the transmitted register and 
let $B_{[c]}=(B_1,B_2,\ldots,B_c)$ be Bob's reliable register. 
When $n$ and $c$ are clear from the context, we abbreviate 
$Q_{[n]}$ and $B_{[c]}$ as $Q$ and $B$, respectively. 
Then an EAQECC can be defined as follows. 

\begin{definition}\label{def.eaqecc}
An $\dsb{n,\kappa,\delta;c}_q$ \emph{EAQECC} $\mathcal Q\subseteq\calH_{Q_{[n]}}\otimes\calH_{B_{[c]}}$ 
encodes $\kappa$ logical qudits into $n$ transmitted physical qudits with the help of 
$c$ maximally entangled pairs shared in advance by the sender Alice and the receiver Bob, 
and has minimum distance $\delta$, 
which means that $\mathcal Q$ can correct every set of qudit erasures affecting at most $\delta-1$ transmitted positions.
More precisely, $\mathcal Q$ is the image of the encoding isometry $V$ given by
$$
\begin{aligned}
V &: (\mathbb C^q)^{\otimes\kappa}\longrightarrow(\mathbb C^q)^{\otimes n}\otimes(\mathbb C^q)^{\otimes c}, \\
V\ket{\phi} &=\frac{1}{\sqrt{q^c}}\sum_{\vect{x}\in\F_q^c}U_{\rm enc}\left(\ket{\phi}\otimes\ket{0}^{\otimes(n-\kappa-c)}\otimes\ket{\vect{x}}\right)\otimes\ket{\vect{x}}_{B_{[c]}},
\end{aligned}
$$
where $0\le c\le n-\kappa$, $\ket{\phi}$ is a state of $\kappa$ logical qudits, $\ket{\vect{x}}=\bigotimes_{j=1}^c\ket{x_j}$ for $\vect{x}=(x_1,x_2,\ldots,x_c)\in\F_q^c$, and $U_{\rm enc}$ is a unitary operator acting on Alice's $n$ input qudits and producing the transmitted register $Q_{[n]}$.
\end{definition}

By Definition~\ref{def.eaqecc}, only the register $Q_{[n]}$ is transmitted through the noisy channel, while $B_{[c]}$ remains with Bob and is assumed to be noiseless. Since $U_{\rm enc}$ acts only on Alice's $n$ qudits and leaves Bob's register unchanged, the overall encoding operation is $U_{\rm enc}\otimes I_{B_{[c]}}$. Setting $c=0$ in Definition~\ref{def.eaqecc} recovers the standard definition of ordinary QECCs.
Therefore, the possibility of recovering an erased transmitted qudit by accessing only a small number of other transmitted qudits, 
together with the removal of the dual-containing constraint through entanglement assistance, 
naturally motivates the introduction of locality into EAQECCs. 
We call EAQECCs with such a local recovery property \emph{entanglement-assisted quantum locally recoverable codes (EAQLRCs)}. These considerations lead to the following problem.

\begin{problem}\label{prob.1}
How should locality be defined for EAQECCs, and how can optimal EAQLRCs be constructed?
\end{problem}

\subsection{Our contributions}

This paper answers Problem~\ref{prob.1}. 
We summarize our contributions as follows.

\begin{enumerate}
\item [\rm 1)] We model a single transmitted qudit erasure by the quantum channel in 
Definition~\ref{def.single-qudit-erasure-channel}, give an equivalent Pauli representation 
in Lemma~\ref{lem.single-qudit-erasure-channel}, 
and define recovery sets in Definition~\ref{def.ij-local}. 
Each recovery set indexes only transmitted qudits, 
while the recovery channel may also access Bob's reliable register. 
Based on the sizes of these recovery sets, we define EAQLRCs in Definition~\ref{def.eaqlrc}. 
For entanglement-assisted stabilizer codes (EASCs), 
we construct local recovery channels from extended stabilizers satisfying suitable 
support conditions, yielding a characterization of sufficient conditions such that 
an EASC has locality $r$ in Theorem~\ref{thm.stabilizer-locality}. 
Moreover, we give a CSS-like construction 
of EAQLRCs from two classical codes $\C_X$ and $\C_Z$ without imposing 
the ordinary dual-containing condition in Theorem~\ref{thm.eacss}. 
In this construction, the locality of the resulting EAQLRC depends on 
the joint supports of the codewords in $\C_X^\perp$ and $\C_Z^\perp$,  
whereas we show that the locality of a single classical code with 
$\C_X=\C_Z=\C$ transfers directly to the resulting EAQLRC in 
Corollary~\ref{cor.symmetric-eacss}.

\item  [\rm 2)]
For an arbitrary $\dsb{n,\kappa,\delta;c}_q$ CSS-like EAQLRC arising from 
Theorem~\ref{thm.eacss}, whether pure or impure, 
we establish the following two bounds for evaluating optimality.
\begin{itemize}
  \item We establish the upper bound
  $$r\le\min\{n-1,\dim(\mathcal C_X+\mathcal C_Z)\}$$
  in Theorem~\ref{thm.elementary-locality}, where $\C_X$ and $\C_Z$ 
  are the two underlying classical codes. 
  We call an EAQLRC whose locality attains this upper bound \emph{trivial}. 
  We further show that applying Corollary~\ref{cor.symmetric-eacss} 
  to an MDS code produces only a trivial EAQLRC in Remark~\ref{rem.trivial-MDS}.

  \item We establish the Singleton-like bound
  $$2\delta\le n+c-\kappa-2\left\lceil\frac{\kappa}{r}\right\rceil+4$$
  in Theorem~\ref{thm.lrc-bounds}. 
  For pure CSS-like EAQLRCs, we further present necessary and sufficient conditions 
  for attaining equality in this bound in Theorem~\ref{thm.singleton-equality}. 
  Based on these conditions, we develop a general framework for constructing optimal 
  pure CSS-like EAQLRCs in Subsection~\ref{subsec:framework},  
  where we begin with parity-check matrices $\mat H_X$ and $\mat H_Z$ 
  of suitable classical codes and use a nonsingular diagonal matrix $\mat D$ 
  to adjust the number of maximally entangled pairs $c$ and improve the possibility 
  that the resulting EAQLRCs have optimal parameters.
\end{itemize}

\item [\rm 3)] We apply this framework in the following two ways and obtain the first two families 
of optimal pure EAQLRCs with flexible parameters and nontrivial localities. 
\begin{itemize}
  \item We take $\mat H_X$ and $\mat H_Z$ to be parity-check matrices of 
  $\mathcal C_X=\mathcal C_1^\perp$ and $\mathcal C_Z=\mathcal C_2$, respectively, 
  where $\mathcal C_1$ and $\mathcal C_2$ are MDS codes satisfying 
  $\dim(\mathcal C_1\cap\mathcal C_2)=d-1-c$. 
  We set $\mat D=\mat I_n$. 
For $q\ge3$ and integers $n,d,c$ satisfying
$2\le d\le n\le q+1$ and
$\max\{0,2d-n-1\}\le c\le d-1$, except for the parameter triples
$(q+1,2,1)$ and $(q+1,q+1,q)$, as well as
$(q+1,q/2+1,0)$ when $q$ is even, we construct a family of optimal pure
$\dsb{n,n+c-2d+2,d;c}_q$ EAQLRCs with locality $n-d+1$ in
Theorem~\ref{thm.mds-eaqlrc-spectrum}. 
  We further show that this family can have nontrivial locality 
  even though its classical codes are MDS in Remark~\ref{rem.con1}.

  \item We take $\mat H_X=\mat H_Z=\mat H$, where $\mat H$ is the block parity-check matrix 
  in \eqref{eq:~grs-block-check222}, and choose the nonidentity diagonal matrix 
  $\mat D=\mat D_s$ in \eqref{eq.D_22}. 
  We then use $\mat H_Z'=\mat H_Z\mat D_s$ as the parity-check matrix of the second classical code. 
  Under the arithmetic conditions of Theorem~\ref{thm.grs-coset-flexible}, 
  we construct a family of optimal pure 
  $\dsb{u(r+1),ur-2d+4+s,d;u+s}_q$ EAQLRCs with locality $r$ 
  for every $0\le s\le d-2$. 
\end{itemize}
We also provide two explicit examples in Examples~\ref{exam.1} and~\ref{exam.2} 
  to illustrate the constructions of optimal pure EAQLRCs with nontrivial localities 
  in Theorems~\ref{thm.mds-eaqlrc-spectrum} and~\ref{thm.grs-coset-flexible}, respectively. 
\end{enumerate}

The remainder of this paper is organized as follows.
We review the necessary background on classical codes and quantum codes 
in Section~\ref{sec:preliminaries}.
We define EAQLRCs, characterize sufficient conditions under which EASCs have locality $r$, 
and present the CSS-like construction of EAQLRCs in Section~\ref{sec:model-css}.
We derive a locality upper bound and a Singleton-like bound 
in Section~\ref{sec:bounds} for arbitrary EAQLRCs arising from 
the CSS-like construction in Section~\ref{sec:model-css}. 
We also establish necessary and sufficient conditions for pure EAQLRCs 
to attain equality in the Singleton-like bound. 
In Section~\ref{sec:constructions}, we develop a general framework for constructing 
optimal pure CSS-like EAQLRCs and construct the first two families of optimal pure EAQLRCs 
with flexible parameters and nontrivial localities.  
Finally, we conclude the paper in Section~\ref{sec:~conclusion}.

\section{Preliminaries}\label{sec:preliminaries}

For any vector $\vect{u}=(u_1,u_2,\ldots,u_n)\in\F_q^n$, 
let $\supp(\vect{u})=\{i:~u_i\ne0\}$ be the {\em support} of $\vect{u}$, 
and let $\wt(\vect{u})=|\supp(\vect{u})|$ be the {\em (Hamming) weight} of $\vect{u}$.
For any subset $S\subseteq\F_q^n$ containing a nonzero vector, let $\wt(S)$ be
the {\em minimum weight} of all nonzero vectors in $S$. 
An $[n,k,d]_q$ linear code $\mathcal{C}$ is a $k$-dimensional subspace of
$\F_q^n$ with minimum distance $d$. 
We also write $\dim(\mathcal C)$ and $d(\mathcal C)$ for its {\em dimension} and {\em minimum distance}. 
For a matrix $\mat{H}$, we use $\ker(\mat{H})$ to denote its \emph{null space} and
$\row(\mat{H})$ to denote its \emph{row space}.  A \emph{parity-check matrix} $\mat{H}$ for
an $[n,k]_q$ linear code $\mathcal{C}$ is an $(n-k)\times n$ matrix chosen over
$\F_q$ such that
$\mathcal{C}=\ker(\mat{H})$ and $\mathcal{C}^\perp=\row(\mat{H})$.

We first recall several basic facts on classical locally recoverable codes
(cLRCs).  Following Gopalan $et~al.$ \cite{GopalanHuangSimitciYekhanin2012},
a classical LRC permits each erased coordinate of a codeword to be recovered
from a set of other coordinates in the same codeword.

\begin{definition}
An $[n,k,d]_q$ linear code $\mathcal{C}$ is
said to have \emph{locality} $r$ if
each symbol $c_i$ of a codeword $\vect{c}=(c_1,c_2,\ldots,c_n)\in\mathcal C$ with $i\in[n]$ 
can be recovered from at most $r$ other distinct symbols of $\vect{c}$.  
More precisely, for each $i\in[n]$, there exist a subset
$I_i=\{i_1,i_2,\ldots,i_{s_i}\}\subseteq[n]\setminus\{i\}$ with $s_i\le r$
and a function $f_i:\F_q^{s_i}\to\F_q$ such that
$c_i=f_i(c_{i_1},c_{i_2},\ldots,c_{i_{s_i}})$ for every $\vect c\in\mathcal C$.
The set $I_i$ is called a \emph{recovery set} for $c_i$.
\end{definition}

Given a linear code, we have the following characterization on its locality 
and the Singleton-like bound on its parameters. 

\begin{lemma}{\rm (\!\! \cite[Definition 1]{LuoChenEzermanLing2025})}\label{lem.classical-locality-dual}
An $[n,k,d]_q$ linear code $\mathcal{C}$ has locality $r$ if, for
each $i\in[n]$, there exists a codeword
$\vect{h}_i\in\mathcal C^\perp$ such that
$i\in\supp(\vect{h}_i)$ and $|\supp(\vect{h}_i)|\le r+1$.
\end{lemma}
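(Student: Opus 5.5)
Fix $i\in[n]$ and let $\vect h_i=(h_1,\ldots,h_n)\in\mathcal C^\perp$ be a dual codeword with $i\in\supp(\vect h_i)$ and $|\supp(\vect h_i)|\le r+1$, as the hypothesis provides. We take the recovery set to be $I_i=\supp(\vect h_i)\setminus\{i\}$. Then $|I_i|\le r$, and $I_i\subseteq[n]\setminus\{i\}$, so the size and position requirements of the definition of locality hold. It remains to produce the recovery function $f_i$.

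For every $\vect c\in\mathcal C$ we have $\langle \vect c,\vect h_i\rangle=\sum_{j=1}^n c_jh_j=0$, since $\vect h_i\in\mathcal C^\perp$. Only indices in $\supp(\vect h_i)$ contribute to this sum, so it becomes $h_ic_i+\sum_{j\in I_i}h_jc_j=0$. Because $h_i\neq0$, we can solve for $c_i$:
$$c_i=-h_i^{-1}\sum_{j\in I_i}h_jc_j .$$
We therefore define the linear map $f_i:\F_q^{|I_i|}\to\F_q$ by $f_i\bigl((x_j)_{j\in I_i}\bigr)=-h_i^{-1}\sum_{j\in I_i}h_jx_j$. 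With this choice, $c_i=f_i(c_{i_1},\ldots,c_{i_{s_i}})$ holds for every codeword $\vect c\in\mathcal C$.

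One degenerate case needs attention: $I_i$ may be empty, i.e.\ $\vect h_i$ is a multiple of the $i$-th unit vector. In that case the identity above gives $c_i=0$ for all codewords, so $f_i$ is the constant zero function on $\F_q^0$. This is still a valid recovery from at most $r$ symbols.

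Since $i$ was arbitrary, $\mathcal C$ has locality $r$. None of the steps is a real obstacle. The only points that need care are the invertibility of $h_i$, which follows from $i\in\supp(\vect h_i)$, and the convention that an empty recovery set is allowed.
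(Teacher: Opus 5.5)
Your argument is correct. Taking $I_i=\supp(\vect h_i)\setminus\{i\}$ and solving the single parity-check equation $\langle \vect c,\vect h_i\rangle=0$ for $c_i$, which is possible because $h_i\neq 0$, gives a valid (linear) recovery function, and your empty-recovery-set case fits the definition's allowance $s_i\le r$. The paper gives no proof of its own here; it imports the statement from the cited reference, where it serves as the definition of locality, and your parity-check argument is the standard justification behind that convention.
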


\begin{lemma}{\rm (\!\! \cite[Theorem 5]{GopalanHuangSimitciYekhanin2012})}\label{lem.classical-singleton}
  For an $[n,k,d]_q$ linear code $\mathcal{C}$ with locality $r$, we have 
  \begin{equation}\label{eq.singleton-classical} d\le n-k-\left\lceil\frac{k}{r}\right\rceil+2. \end{equation}
\end{lemma}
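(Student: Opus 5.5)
The statement to prove is Lemma~\ref{lem.classical-singleton}: $d\le n-k-\lceil k/r\rceil+2$ for an $[n,k,d]_q$ code with locality $r$. The plan follows the standard Gopalan--Huang--Simitci--Yekhanin argument. We find a large set $T\subseteq[n]$ of coordinates with $\rank(\mathcal C|_T)\le k-1$. Then some nonzero codeword vanishes on $T$, because the projection $\mathcal C\to\F_q^{T}$ has nontrivial kernel. Hence $d\le n-|T|$. It therefore suffices to build $T$ with $|T|\ge k-1+\lceil k/r\rceil-1$ and $\rank(\mathcal C|_T)\le k-1$. Here the rank of a set $S$ means the dimension of the projection $\mathcal C|_S$. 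This is also the number of linearly independent columns of a generator matrix $\mat G$ indexed by $S$.

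We build $T$ greedily, starting from $T=\emptyset$. While $\rank(T)\le k-2$, we pick a coordinate $i$ whose column of $\mat G$ lies outside the span of the columns in $T$. Such an $i$ exists since the full rank is $k$. We then add $i$ together with its recovery set $I_i$, so $|I_i|\le r$. Locality means the column $g_i$ is a linear combination of the columns indexed by $I_i$; for a linear code this follows from the recovery function. Hence adding $\{i\}\cup I_i$ increases the size of $T$ by at least one more than it increases the rank. Precisely, let the rank increase be $a$ and the size increase be $b$. Then $a\le r$, $a\ge1$, and $b\ge a+1$. If adding the whole set would push the rank to $k$ or beyond, we instead add just enough columns to reach rank exactly $k-1$. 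Each such column raises the rank by exactly $1$ and the size by at least $1$.

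For the counting, suppose $t$ full rounds occur, each contributing $b_j-a_j\ge1$. The rank grows by at most $r$ per round. Before the final truncation step the rank is at most $k-1$. The number of full steps before rank $k-1$ is reached is at least $\lceil (k-1)/r\rceil$. A cleaner accounting that gives exactly $\lceil k/r\rceil-1$ is the following. Stop the full steps while the rank is still $\le k-1-r$, performing them at most as long as possible. At least $\lceil k/r\rceil-1$ such steps occur, since each adds rank at most $r$ and we only stop once the rank exceeds $k-1-r$. After that, pad with single independent columns up to rank $k-1$. The result is $|T|\ge (k-1)+(\lceil k/r\rceil-1)$, and substituting into $d\le n-|T|$ gives the claimed bound.

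The main obstacle is the bookkeeping in the final step. We must guarantee that every full step before stopping is guaranteed to gain an extra coordinate, meaning $i$ is new and independent while the columns in $I_i$ already span it. We must also check that the final partial step never overshoots rank $k-1$. Both are handled by truncating: we only ever add single columns in the last phase, and we only take full steps while the rank stays at most $k-1-r$.
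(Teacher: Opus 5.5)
Your argument is correct. The paper gives no proof of this lemma, only the citation to Gopalan--Huang--Simitci--Yekhanin, and your greedy construction of a large rank-$(k-1)$ coordinate set $T$ with $d\le n-|T|$ is exactly the argument of that source.

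The accounting you finally adopt is sound and gives $|T|\ge (k-1)+\lceil k/r\rceil-1$: full steps only while the rank is at most $k-1-r$, then padding with single independent columns. Discard the intermediate claim that the truncation scheme yields at least $\lceil (k-1)/r\rceil$ full steps. It fails in general, e.g.\ for $k=4$, $r=2$, where one full step can be followed by a truncated one.
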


Let $q=p^m$ be a prime power and let
${\rm tr}:\F_q\to\F_p$ be the {\em field trace}. Set
$\omega_p=e^{2\pi i/p}$. For $a,b,x\in\F_q$, define the {\em $q$-ary Pauli operators} by
$$X(a)\ket{x}=\ket{x+a}~{\rm and}~Z(b)\ket{x} = \omega_p^{{\rm tr}(bx)}\ket{x}.$$
They satisfy
$Z(b)X(a)=\omega_p^{{\rm tr}(ab)}X(a)Z(b)$ and are unitary, with
$X(a)^\dagger=X(-a)$ and $Z(b)^\dagger=Z(-b)$.
For an ordered register $A=(A_1,A_2,\ldots,A_t)$ of $t$ qudits and
$\vect a=(a_1,a_2,\ldots,a_t),\vect b=(b_1,b_2,\ldots,b_t)\in\F_q^t$, put
$$X_A(\vect a)=\bigotimes_{j=1}^tX(a_j)~{\rm and}~Z_A(\vect b)=\bigotimes_{j=1}^tZ(b_j).$$
These tensor-product operators are again unitary.
The $q$-ary commutation relation between
$X_A(\vect a)Z_A(\vect b)$ and
$X_A(\vect a')Z_A(\vect b')$ is
\begin{equation}\label{eq:pauli-commutation} X_A(\vect a)Z_A(\vect b) X_A(\vect a')Z_A(\vect b') = \omega_p^{ {\rm tr}\left( \langle\vect a',\vect b\rangle - \langle\vect a,\vect b'\rangle \right) } X_A(\vect a')Z_A(\vect b') X_A(\vect a)Z_A(\vect b). \end{equation}

The {\em $q$-ary Pauli group} on $\calH_A$, including scalar phases, is 
\begin{equation}\label{eq.error-group} \mathcal G_A(q) = \left\{ \omega_p^\ell X_A(\vect a)Z_A(\vect b):~\vect a,\vect b\in\F_q^t,\ \ell\in\mathbb Z_p \right\}. \end{equation}
By Definition~\ref{def.eaqecc}, every transmitted Pauli error occurring in an EAQECC has the form
$E_{Q_{[n]}}\otimes I_{B_{[c]}}$, where
$E_{Q_{[n]}}\in\mathcal G_{Q_{[n]}}(q)$ and $I_{B_{[c]}}$ is the
identity operator on Bob's register $B_{[c]}$.  The joint Pauli group
$\mathcal G_{Q_{[n]}\cup B_{[c]}}(q)$ is used below to define
entanglement-assisted stabilizer codes. 

\begin{definition}\label{def:ea-stabilizer-code}
An $\dsb{n,\kappa,\delta;c}_q$
\emph{entanglement-assisted stabilizer code (EASC)} is an EAQECC
$\mathcal Q$ for which there exists an abelian subgroup
$S_{\rm EA}\subseteq \mathcal G_{Q_{[n]}\cup B_{[c]}}(q)$ such that
\begin{equation}\label{eq:~easc-code-space} \mathcal Q = \left\{ \ket{\psi}\in \calH_{Q_{[n]}}\otimes\calH_{B_{[c]}}:~G\ket{\psi}=\ket{\psi} \text{ for every }G\in S_{\rm EA} \right\}. \end{equation}
\end{definition}

The group $S_{\rm EA}$ is called the \emph{extended stabilizer group},
and its elements are called \emph{extended stabilizers}. 
Unlike a transmitted channel error $E_{Q_{[n]}}\otimes I_{B_{[c]}}$, an extended stabilizer may act nontrivially on Bob's register. 
With \eqref{eq.error-group} and \eqref{eq:~easc-code-space}, every $G\in S_{\rm EA}$ can be written as 
$$G = \omega_p^\ell X_{Q_{[n]}}(\vect a)Z_{Q_{[n]}}(\vect b) \otimes X_{B_{[c]}}(\vect r)Z_{B_{[c]}}(\vect s),$$
where $\vect a,\vect b\in\F_q^n$, $\vect r,\vect s\in\F_q^c$, and
$\ell\in\mathbb \F_p$.
Its \emph{Pauli label} and {\em support} are, respectively,
$${\rm lab}(G) =(\vect a,\vect r\mid\vect b,\vect s) \in\F_q^{2(n+c)}~{\rm and}~\supp_Q(G)=\{j\in[n]:~(a_j,b_j)\ne(0,0)\}.$$
The \emph{Pauli label set} of $S_{\rm EA}$ is
$L(S_{\rm EA}) =\{{\rm lab}(G):~G\in S_{\rm EA}\} \subseteq\F_q^{2(n+c)}.$
Moreover, the identity
${\rm lab}(GG')={\rm lab}(G)+{\rm lab}(G')$ for
$G,G'\in S_{\rm EA}$ shows that $L(S_{\rm EA})$ is an $\F_p$-linear
subspace of $\F_q^{2(n+c)}$.  To prove label uniqueness, suppose that
$\lambda I\in S_{\rm EA}$.  For any nonzero
$\ket{\psi}\in\mathcal Q$,
$\lambda\ket{\psi}=(\lambda I)\ket{\psi}=\ket{\psi}$, which implies
$\lambda=1$.  If $G,G'\in S_{\rm EA}$ have the same Pauli label, then
$G'G^{-1}=\lambda I\in S_{\rm EA}$ for some scalar $\lambda$.  The preceding
argument gives $\lambda=1$, so $G'=G$.  Thus, every element of $L(S_{\rm EA})$
determines a unique extended stabilizer despite the omitted scalar phase.

We recall the EA-CSS construction from pairs of classical linear codes.  The
resulting EAQECCs are EASCs.

\begin{lemma}{\rm (\!\! EA-CSS construction \cite[Theorems~2 and~4]{GalindoHernandoMatsumotoRuano2019})}\label{lem:CSS-construction}
Let $\mathcal{C}_X$ and $\mathcal{C}_Z$ be $[n,k_X,d_X]_q$ and $[n,k_Z,d_Z]_q$
linear codes, respectively.  Then there exists an EAQECC $\mathcal{Q}(\C_X,\C_Z)$ with parameters
$\dsb{n,\kappa,\delta;c}$, where
\begin{equation}\label{eq.eaqecc-css}
  \begin{aligned}
  c &= n-k_X-\dim(\mathcal C_X^\perp\cap\mathcal C_Z) =n-k_Z-\dim(\mathcal C_X\cap\mathcal C_Z^\perp),\\
  \kappa &=k_X+k_Z-n+c~{\rm and}~\\
  \delta &= \left\{\begin{array}{ll} \min\{d_X,d_Z\}, & \text{if }\mathcal C_X \subseteq \mathcal C_Z^\perp,\\[1ex]
  \min\{ \wt(\mathcal C_X\setminus(\mathcal C_X\cap\mathcal C_Z^\perp)), \wt(\mathcal C_Z\setminus(\mathcal C_X^\perp\cap\mathcal C_Z))\}, & \text{otherwise.}
  \end{array}
  \right.
  \end{aligned}
\end{equation}
Moreover, the EAQECC $\mathcal{Q}(\C_X,\C_Z)$ is called \emph{pure} if
$\delta=\min\{d_X,d_Z\}$ and \emph{impure} otherwise.
\end{lemma}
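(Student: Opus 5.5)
We would prove the lemma by building the extended stabilizer group explicitly from parity-check matrices, symplectically diagonalizing the non-commuting part, and then computing parameters.

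\textbf{Setup.} Let $\mat H_X$ be an $(n-k_X)\times n$ parity-check matrix of $\C_X$ and $\mat H_Z$ an $(n-k_Z)\times n$ parity-check matrix of $\C_Z$. Consider the $\F_q$-span of the Pauli labels $(\vect h,\vect 0\mid\vect 0,\vect 0)$ for $\vect h\in\C_X^\perp$ and $(\vect 0,\vect 0\mid\vect g,\vect 0)$ for $\vect g\in\C_Z^\perp$. By \eqref{eq:pauli-commutation}, the commutation of these operators is governed by the matrix $\mat M=\mat H_X\mat H_Z^T$ of Euclidean inner products; we restrict to $\F_q$-multiples so the trace form is nondegenerate. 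We have $\rank(\mat M)=(n-k_X)-\dim(\C_X^\perp\cap\C_Z)$, because the left kernel of $\mat M$ corresponds to the vectors of $\C_X^\perp$ orthogonal to $\C_Z^\perp$, i.e.\ to $\C_X^\perp\cap\C_Z$. Set $c=\rank(\mat M)$. The symmetric expression $c=n-k_Z-\dim(\C_X\cap\C_Z^\perp)$ follows because $\rank(\mat M)=\rank(\mat M^T)$.

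\textbf{Key step: diagonalization.} Choose invertible $\mat P,\mat R$ with $\mat P\mat M\mat R^T=\mathrm{diag}(\mat I_c,\mat 0)$, and replace $\mat H_X,\mat H_Z$ by $\mat P\mat H_X,\mat R\mat H_Z$. Now only the first $c$ $X$-rows fail to commute, each only with its partner among the first $c$ $Z$-rows. Extend these $c$ pairs to Bob's register by appending $X(\vect e_j)$ and $Z(-\vect e_j)$ (with sign chosen to cancel the phase) on qudit $B_j$. The extended operators then pairwise commute, giving an abelian $S_{\rm EA}$ with $(n-k_X)+(n-k_Z)$ independent generators over $\F_q$ acting on $n+c$ qudits. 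I expect this to be the main technical step: one must check the phases and the $\F_p$-versus-$\F_q$ linear structure carefully, and verify that $\lambda I\notin S_{\rm EA}$ for $\lambda\ne1$. The standard theory of stabilizer codes on $n+c$ qudits then gives $\kappa=n+c-(2n-k_X-k_Z)=k_X+k_Z-n+c$. The state has the form of Definition~\ref{def.eaqecc}, since the Bob-side generators $X_BZ_B$ pair into maximally entangled states after a Clifford encoding.

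\textbf{Distance.} An error $E\otimes I_B$ with label $(\vect a\mid\vect b)$ is undetectable exactly when it commutes with all generators. Commuting with the extended $Z$-type generators forces $\vect a\in\C_Z$ together with vanishing inner products against the $c$ paired rows; since the Bob part of $E$ is trivial, this yields $\vect a\in\C_Z$ after the basis change. Similarly, $\vect b\in\C_X$. The error is harmless precisely when it lies in the stabilizer restricted to $I_B$. Those restricted elements are generated by the commuting rows, namely those with labels in $\C_X^\perp\cap\C_Z$ and $\C_Z^\perp\cap\C_X$. Hence $\delta$ is the minimum weight of the nontrivial logical operators, which is the formula in \eqref{eq.eaqecc-css}. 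When $\C_X\subseteq\C_Z^\perp$ the degenerate part is empty, so the minimum is simply $\min\{d_X,d_Z\}$. For the $X$ and $Z$ parts separately, the minimum is taken over differences lying outside the degenerate subspace, which matches the set-difference form of the formula.
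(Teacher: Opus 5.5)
\paragraph{Verdict: the strategy is right, but the argument for the first case of $\delta$ is wrong.} Your construction is the one the paper relies on. The paper simply cites \cite{GalindoHernandoMatsumotoRuano2019} for the lemma and records $S_{\rm EA}$ through a rank factorization $-\mat H_X\mat H_Z^{\top}=\mat F_X\mat F_Z^{\top}$. Your normal form $\mat P\mat M\mat R^{\top}={\rm diag}(\mat I_c,\mat 0)$, with Bob parts $X(\vect e_j)$ and $Z(-\vect e_j)$, is the particular choice $\mat F_X=(\mat I_c\ \mat 0)^{\top}$, $\mat F_Z=-(\mat I_c\ \mat 0)^{\top}$ for the transformed matrices. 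The following parts are correct:
\begin{itemize}
\item $c$, computed from the left and right kernels of $\mat M$;
\item $\kappa$;
\item the ``otherwise'' branch of $\delta$: undetectable errors have labels $(\vect a\mid\vect b)\in\C_Z\times\C_X$, and the trivial ones are $(\C_X^\perp\cap\C_Z)\times(\C_X\cap\C_Z^\perp)$.
\end{itemize}
To close that branch you only need one more sentence. Since $|\supp(\vect a)\cup\supp(\vect b)|\ge\max\{\wt(\vect a),\wt(\vect b)\}$, and nontriviality forces $\vect a\notin\C_X^\perp\cap\C_Z$ or $\vect b\notin\C_X\cap\C_Z^\perp$, the minimum is attained with the other component equal to zero.

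\paragraph{The gap: the branch $\C_X\subseteq\C_Z^\perp$.} Here (equivalently $\C_Z\subseteq\C_X^\perp$) you have $\C_X\cap\C_Z^\perp=\C_X$ and $\C_X^\perp\cap\C_Z=\C_Z$. So the degenerate part is all of $\C_Z\times\C_X$, not empty. Every undetectable error acts trivially, both set differences in \eqref{eq.eaqecc-css} are empty, and $\kappa=k_X-\dim(\C_X\cap\C_Z^\perp)=0$. Your ``minimum weight of nontrivial logical operators'' is then a minimum over the empty set and cannot produce $\min\{d_X,d_Z\}$. The picture you describe (trivial degenerate part, hence $\min\{d_X,d_Z\}$) is really the situation $\C_X\cap\C_Z^\perp=\C_X^\perp\cap\C_Z=\{\vect 0\}$, which belongs to the \emph{other} branch.

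\paragraph{How to fix it.} Recognize that the first branch is exactly the case $\kappa=0$. There $\mathcal Q$ is one-dimensional and trivially corrects every erasure pattern, so the erasure-correction property in Definition~\ref{def.eaqecc} does not determine $\delta$. The lemma, following \cite{GalindoHernandoMatsumotoRuano2019}, adopts the convention that $\delta$ is the minimum weight of a nonzero undetectable error. By your own commutation computation this equals $\min\{|\supp(\vect a)\cup\supp(\vect b)|:~(\vect a,\vect b)\in(\C_Z\times\C_X)\setminus\{(\vect 0,\vect 0)\}\}=\min\{d_X,d_Z\}$. This convention is precisely what the paper uses in Case~1 of the proof of Theorem~\ref{thm.lrc-bounds}, so it should be stated as such rather than derived from the logical-operator formula.
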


\begin{lemma}{\rm (\!\! \cite[Proposition~4]{GalindoHernandoMatsumotoRuano2019})}\label{lem.defect}
Let $\mathcal{C}_X$ and $\mathcal{C}_Z$ be $[n,k_X]_q$ and $[n,k_Z]_q$ linear
codes, respectively.  
If $\mat{H}_X$ and $\mat{H}_Z$ are respective parity-check matrices of $\mathcal{C}_X$ and $\mathcal{C}_Z$,
then 
\begin{equation*}
  \begin{aligned}
  \dim(\mathcal C_X^\perp\cap\mathcal C_Z) & = n-k_X-\rank(\mat{H}_X\mat{H}_Z^{\top})~{\rm and}~\\
  \dim(\mathcal C_X\cap\mathcal C_Z^\perp) & = n-k_Z-\rank(\mat{H}_X\mat{H}_Z^{\top}).
  \end{aligned}
\end{equation*}
\end{lemma}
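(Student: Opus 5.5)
Lemma~\ref{lem.defect} is a statement about dimensions of intersections, so the plan is to identify each intersection with a kernel and then apply rank--nullity. The natural tool is the map $\varphi:\F_q^{n-k_X}\to\F_q^{n-k_Z}$, $\vect u\mapsto \vect u\mat H_X\mat H_Z^\top$, where $\mat H_X$ is $(n-k_X)\times n$ and $\mat H_Z$ is $(n-k_Z)\times n$. Both parity-check matrices have full row rank, since the row space of $\mat H_X$ is $\C_X^\perp$, which has dimension $n-k_X$, and likewise for $\mat H_Z$.

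\emph{First formula.} Every vector of $\C_X^\perp=\row(\mat H_X)$ can be written uniquely as $\vect u\mat H_X$ with $\vect u\in\F_q^{n-k_X}$; uniqueness is exactly the full-row-rank property. Such a vector lies in $\C_Z=\ker(\mat H_Z)$ if and only if $\mat H_Z(\vect u\mat H_X)^\top=\vect 0$, which is equivalent to $\vect u\mat H_X\mat H_Z^\top=\vect 0$. Hence $\vect u\mapsto\vect u\mat H_X$ gives a linear isomorphism
\[
\{\vect u\in\F_q^{n-k_X}:~\vect u\mat H_X\mat H_Z^\top=\vect 0\}\;\longrightarrow\;\C_X^\perp\cap\C_Z .
\]
The domain is the left null space of $\mat H_X\mat H_Z^\top$. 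By rank--nullity it has dimension $(n-k_X)-\rank(\mat H_X\mat H_Z^\top)$, which is the first identity.

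\emph{Second formula.} The argument is symmetric. Write each vector of $\C_Z^\perp$ uniquely as $\vect v\mat H_Z$ with $\vect v\in\F_q^{n-k_Z}$. Such a vector lies in $\C_X$ if and only if $\mat H_X\mat H_Z^\top\vect v^\top=\vect 0$. So $\C_X\cap\C_Z^\perp$ is isomorphic to the right null space of $\mat H_X\mat H_Z^\top$, which has dimension $(n-k_Z)-\rank(\mat H_X\mat H_Z^\top)$.

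The only point requiring care is the injectivity of the parametrizations $\vect u\mapsto\vect u\mat H_X$ and $\vect v\mapsto\vect v\mat H_Z$. This is where the full row rank of the parity-check matrices, guaranteed by the convention that an $[n,k]_q$ code has an $(n-k)\times n$ parity-check matrix, is essential. Apart from that, the proof is routine linear algebra.
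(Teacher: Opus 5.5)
Your proof is correct and complete. Identifying $\C_X^\perp\cap\C_Z$ and $\C_X\cap\C_Z^\perp$ with the left and right null spaces of $\mat H_X\mat H_Z^{\top}$ and applying rank--nullity gives both identities. The paper offers no proof of its own here; it imports the result from \cite[Proposition~4]{GalindoHernandoMatsumotoRuano2019}, and your argument is the standard one for it.

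One refinement: full row rank is not essential to the lemma itself, only to your parametrization. The quantity $\rank(\mat H_X\mat H_Z^{\top})$ is the dimension of the image of $\row(\mat H_X)=\C_X^\perp$ under $\vect x\mapsto\vect x\mat H_Z^{\top}$, and the kernel of that map is $\C_X^\perp\cap\C_Z$. So both identities hold for any matrices whose row spaces are the dual codes.
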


For the $\dsb{n,\kappa,\delta;c}_q$ EA-CSS code $\mathcal Q(\C_X,\C_Z)$ obtained from
Lemma~\ref{lem:CSS-construction}, let $\mat H_X$ and $\mat H_Z$ be parity-check
matrices of $\mathcal C_X$ and $\mathcal C_Z$, respectively.  By
Lemma~\ref{lem.defect}, $c=\rank(\mat H_X\mat H_Z^{\top})$.  
Choose a rank factorization
$$-\mat H_X\mat H_Z^{\top}=\mat F_X\mat F_Z^{\top}~{\rm with}~
\mat F_X\in\F_q^{(n-k_X)\times c}~{\rm and}~
\mat F_Z\in\F_q^{(n-k_Z)\times c}.$$ 
Then the extended stabilizer group of the EA-CSS code $\mathcal{Q}(\C_X,\C_Z)$ is
\begin{equation}\label{eq:~ea-css-extended-stabilizer}
  S_{\rm EA}=\{X_Q(\vect u\mat H_X)Z_Q(\vect v\mat H_Z)\otimes X_B(\vect u\mat F_X)Z_B(\vect v\mat F_Z):~
  \vect u\in\F_q^{n-k_X},\ \vect v\in\F_q^{n-k_Z}\}. 
\end{equation}
In fact, with the notation $G(\vect u,\vect v)=X_Q(\vect u\mat H_X)Z_Q(\vect v\mat H_Z)\otimes X_B(\vect u\mat F_X)Z_B(\vect v\mat F_Z)$, 
it follows from \eqref{eq:pauli-commutation} that 
$$G(\vect u,\vect v)G(\vect u',\vect v')
=\omega_p^{{\rm tr}\left(
\vect u'(\mat H_X\mat H_Z^{\top}
+\mat F_X\mat F_Z^{\top})\vect v^{\top}\right)}
G(\vect u+\vect u',\vect v+\vect v')
=G(\vect u+\vect u',\vect v+\vect v').$$
Thus, the set $S_{\rm EA}$ in \eqref{eq:~ea-css-extended-stabilizer} is closed under
multiplication without additional scalar phases. In particular, every
element is represented with scalar phase $1$, and $L(S_{\rm EA})$ is
$\F_q$-linear.

\begin{remark}
If $c=0$, then $\mat F_X$ and $\mat F_Z$ have no columns, 
implying that 
$
X_B(\vect u\mat F_X)Z_B(\vect v\mat F_Z)=I_{B_{[0]}}.
$
In this case, the extended stabilizer group $S_{\rm EA}$ reduces to
\begin{align*}
  \begin{split}
    S_{\rm CSS} & = \{X_Q(\vect u\mat H_X)Z_Q(\vect v\mat H_Z)\otimes I_{B_{[0]}}:~\vect u\in\F_q^{n-k_X},\ \vect v\in\F_q^{n-k_Z}\} \\
                & = \{X_Q(\vect h_X)Z_Q(\vect h_Z):~\vect h_X\in\mathcal C_X^\perp,\ \vect h_Z\in\mathcal C_Z^\perp\},
  \end{split}
\end{align*}
which is exactly the stabilizer group of an ordinary quantum CSS code (see for example \cite[Definition~10]{LiJinXing2026Asymmetric}).
\end{remark}

\section{Characterizations and construction methods of EAQLRCs}\label{sec:model-css}

In this section, we define locality for EAQECCs and give sufficient
criteria for EASCs to have locality $r$.

\subsection{Introducing locality into EAQECCs}

According to Definition~\ref{def.eaqecc}, the transmitted register $Q_{[n]}$
passes through the noisy channel, whereas Bob's register $B_{[c]}$ is retained
and remains reliable.  Therefore, the definition of locality must count
access to transmitted qudits separately from access to Bob's qudits. 

For a finite register $A$, let $\calM(A)$ denote the {\em set of density operators} 
on $\calH_A$, namely, the positive semidefinite operators $\rho$ on $\calH_A$ satisfying $\Tr(\rho)=1$.
Given an EAQECC  $\mathcal Q\subseteq\calH_{Q_{[n]}}\otimes\calH_{B_{[c]}}$, 
a density operator $\rho\in\calM(Q_{[n]}\cup B_{[c]})$ is called a \emph{code state} 
of $\mathcal Q$ if it can be written as
$\rho=\sum_j p_j\ket{\psi_j}\bra{\psi_j},$ 
where each $\ket{\psi_j}$ is a unit vector in $\mathcal Q$, 
$p_j\ge0$, and $\sum_jp_j=1$.
A code state is called \emph{pure} if $\rho=\ket{\psi}\bra{\psi}$ for some unit vector $\ket{\psi}\in\mathcal Q$.
By the usual identification of a pure state with its rank-one density operator, 
we also refer to $\ket{\psi}$ as a pure code state when no confusion can arise. 
We write $I_A$ for the identity operator on $\calH_A$, $\id_A$ for the identity channel on $A$, and
$U^\dagger$ for the conjugate transpose of a linear operator $U$.
For disjoint registers $A$ and $C$, we write $\Tr_A$ for the {\em partial trace} over $A$. 
For any orthonormal basis $\{\ket{x}\}$ of $\calH_A$ and every density
operator $\rho\in\calM(A\cup C)$, it is defined by
\begin{align}\label{eq:~partial-trace}
    \Tr_A(\rho)=\sum_x(\bra{x}\otimes I_C)\rho(\ket{x}\otimes I_C).
\end{align}

A {\em quantum channel} from register $A$ to register $C$ is a map
$\mathcal K:\calM(A)\to\calM(C)$ that admits a completely positive 
trace-preserving (CPTP) linear extension from the linear operators on
$\calH_A$ to those on $\calH_C$.
In finite dimensions, this is equivalent to the existence of linear
operators $K_\nu:\calH_A\to\calH_C$, called {\em Kraus operators}, such that
\begin{align}\label{eq.quantum-channel}
\mathcal K(\rho)=\sum_\nu K_\nu\rho K_\nu^\dagger
~{\rm and}~
\sum_\nu K_\nu^\dagger K_\nu=I_A.
\end{align}
In the studies of qLRCs \cite{GolowichGuruswami2023,GalindoHernandoMartinCruzMatsumoto2024,
LiJinXing2026Asymmetric,GuruswamiKshirsagarTrivedi2026},
the erasure of a qudit at a known position is modeled by the completely
depolarizing channel on that qudit.
Following the model in
\cite[Definition~14]{LiJinXing2026Asymmetric}, we apply the completely
depolarizing channel to one transmitted qudit and the identity channel
to all the other transmitted qudits and Bob's register. 
Then we have the following definition.

\begin{definition}\label{def.single-qudit-erasure-channel}
For $i\in[n]$, define the map 
$$\Gamma_i:\calM(Q_{[n]}\cup B_{[c]}) \longrightarrow \calM(Q_{[n]}\cup B_{[c]})$$
for every quantum state
$\rho\in\calM(Q_{[n]}\cup B_{[c]})$ by
$$\Gamma_i(\rho) = \frac{I_{Q_i}}{q}\otimes\Tr_{Q_i}(\rho).$$
Thus, $\Gamma_i$ traces out $Q_i$ and replaces it by a maximally mixed qudit,
while leaving all other labelled subsystems unchanged.
\end{definition}

For $i\in[n]$, let $X_i(a)$ and $Z_i(b)$ act as $X(a)$ and $Z(b)$ on
$Q_i$ and as the identity on $Q_{[n]\setminus\{i\}}\cup B_{[c]}$. 
The following identity \eqref{eq:~local-erasure-channel} is included in Lemma \ref{lem.single-qudit-erasure-channel} for completeness.  
It expresses $\Gamma_i$ in terms of the operators $X_i(a)Z_i(b)$ used below and verifies
that $\Gamma_i$ is a quantum channel.

\begin{lemma}\label{lem.single-qudit-erasure-channel}
For every $i\in[n]$ and $\rho\in\calM(Q_{[n]}\cup B_{[c]})$, the map $\Gamma_i$ in
Definition~\ref{def.single-qudit-erasure-channel} satisfies
\begin{equation}\label{eq:~local-erasure-channel} \Gamma_i(\rho)=\frac{1}{q^2}\sum_{a,b\in\F_q}\bigl(X_i(a)Z_i(b)\bigr)\rho\bigl(X_i(a)Z_i(b)\bigr)^\dagger. \end{equation}
Consequently, $\Gamma_i$ is a quantum channel.
\end{lemma}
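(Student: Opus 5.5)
Since both sides of \eqref{eq:~local-erasure-channel} are linear in $\rho$ and act as the identity on $Q_{[n]\setminus\{i\}}\cup B_{[c]}$, I would reduce the identity to the single-qudit twirling statement
\[
\frac{1}{q^2}\sum_{a,b\in\F_q}X(a)Z(b)\,M\,\bigl(X(a)Z(b)\bigr)^\dagger=\Tr(M)\frac{I}{q}
\]
for every linear operator $M$ on $\mathbb C^q$. To justify the reduction, I would expand an arbitrary $\rho$ as $\rho=\sum_{x,y}\ket{x}\bra{y}_{Q_i}\otimes R_{xy}$, where $R_{xy}$ acts on the remaining registers. By \eqref{eq:~partial-trace}, $\Tr_{Q_i}(\rho)=\sum_x R_{xx}$. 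Applying the single-qudit identity with $M=\ket{x}\bra{y}$ then gives $\Gamma_i(\rho)$ term by term, because $\Tr(\ket{x}\bra{y})=\delta_{xy}$.

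To prove the single-qudit identity, I would check it on the basis $M=\ket{x}\bra{y}$. First, $Z(b)\ket{x}\bra{y}Z(b)^\dagger=\omega_p^{{\rm tr}(b(x-y))}\ket{x}\bra{y}$. Summing over $b\in\F_q$ uses the character orthogonality $\sum_{b\in\F_q}\omega_p^{{\rm tr}(bz)}=q\,\delta_{z,0}$. This holds because ${\rm tr}$ is a surjective $\F_p$-linear map, so for $z\ne0$ the map $b\mapsto{\rm tr}(bz)$ is a nontrivial additive character and sums to zero. The sum therefore kills the off-diagonal terms and leaves $q\,\delta_{xy}\ket{x}\bra{x}$. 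Next, conjugating by $X(a)$ and summing over $a$ gives $\sum_a\ket{x+a}\bra{x+a}=I$. Combining the two sums yields $\frac{1}{q^2}\cdot q\,\delta_{xy}I=\Tr(M)I/q$, as required.

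For the consequence that $\Gamma_i$ is a quantum channel, the Kraus operators are $K_{a,b}=\frac1q X_i(a)Z_i(b)$. Since each $X_i(a)Z_i(b)$ is unitary, $\sum_{a,b}K_{a,b}^\dagger K_{a,b}=\frac{1}{q^2}\cdot q^2 I=I$, so \eqref{eq.quantum-channel} holds. The same formula gives the CPTP linear extension to all operators on the full register.

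The only step needing care is the character sum over $\F_q$ when $q=p^m$ is not prime. Nondegeneracy of the trace form settles it: for $z\ne0$ some $b$ has ${\rm tr}(bz)\ne0$. The remaining steps are routine bookkeeping with the partial trace.
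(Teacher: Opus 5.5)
Your proposal is correct and follows essentially the same route as the paper. The paper also block-decomposes $\rho=\sum_{x,y}\ket{x}\bra{y}\otimes\rho_{x,y}$ on $Q_i$, uses $\sum_{b}\omega_p^{{\rm tr}(bt)}=q\,\delta_{t,0}$ to eliminate the off-diagonal blocks and $\sum_a\ket{x+a}\bra{x+a}=I_{Q_i}$ to obtain $\frac{I_{Q_i}}{q}\otimes\Tr_{Q_i}(\rho)$, and then takes Kraus operators $K_{a,b}=X_i(a)Z_i(b)/q$. Stating the single-qudit twirl identity first and justifying the character sum for non-prime $q$ are only presentational refinements of that same computation.
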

\begin{proof}
Let $A=Q_{[n]\setminus\{i\}}\cup B_{[c]}$. 
Since $\{\ket{x}\bra{y}:~x,y\in\F_q\}$ is a basis for the linear operators on
$\calH_{Q_i}$, the density operator $\rho$ has a unique block decomposition
$\rho=\sum_{x,y\in\F_q}\ket{x}\bra{y}\otimes\rho_{x,y},$
where each $\rho_{x,y}$ is a linear operator on $\calH_A$ given by
$\rho_{x,y}=(\bra{x}\otimes I_A)\rho(\ket{y}\otimes I_A).$
With \eqref{eq:~partial-trace}, we derive  
$\Tr_{Q_i}(\rho)=\sum_{x\in\F_q}\rho_{x,x}.$
Note that $\sum_{b\in\F_q}\omega_p^{{\rm tr}(bt)}=0$ for $t\in\F_q^*$ 
and $\sum_{b\in\F_q}\omega_p^{{\rm tr}(bt)}=q$ for $t=0$.  
Together with the facts $X(a)Z(b)\ket{x}=\omega_p^{{\rm tr}(bx)}\ket{x+a}$ and $\sum_{a\in\F_q}\ket{x+a}\bra{x+a}=I_{Q_i}$, 
since $X_i(a)Z_i(b)$ acts as $X(a)Z(b)$ on $Q_i$ and as the identity on $A$, it can be verified that 
$$
\begin{aligned}
\frac{1}{q^2}\sum_{a,b\in\F_q}\bigl(X_i(a)Z_i(b)\bigr)\rho\bigl(X_i(a)Z_i(b)\bigr)^\dagger
= & \frac{1}{q^2}\sum_{a,b,x,y\in\F_q}\omega_p^{{\rm tr}(b(x-y))}
\ket{x+a}\bra{y+a}\otimes\rho_{x,y}\\
= & \frac{1}{q}\sum_{a,x\in\F_q}\ket{x+a}\bra{x+a}\otimes\rho_{x,x}\\
= & \frac{I_{Q_i}}{q}\otimes\sum_{x\in\F_q}\rho_{x,x} \\
= & \frac{I_{Q_i}}{q}\otimes\Tr_{Q_i}(\rho)\\
= & \Gamma_i(\rho).
\end{aligned}
$$
This proves \eqref{eq:~local-erasure-channel}.

It remains to verify that $\Gamma_i$ is a quantum channel. 
With \eqref{eq:~local-erasure-channel}, it is clear to see that $\Gamma_i$ is a linear map. 
Hence, it suffices to find a family of linear operators satisfying the two
conditions in \eqref{eq.quantum-channel}. 
Take $K_{a,b}=\frac{X_i(a)Z_i(b)}{q}$ for $(a,b)\in\F_q^2$. 
Then the first condition in \eqref{eq.quantum-channel} immediately holds from \eqref{eq:~local-erasure-channel}.
Since every $X_i(a)Z_i(b)$ is unitary,
$$\sum_{a,b\in\F_q}K_{a,b}^\dagger K_{a,b}
=\frac{1}{q^2}\sum_{a,b\in\F_q}I_{Q_{[n]}\cup B_{[c]}}
=I_{Q_{[n]}\cup B_{[c]}}.$$
This completes the proof. 
\end{proof}

Recall that a code state of an EAQECC can be pure or mixed, 
and every mixed code state is a convex combination of pure code states.  
Since quantum channels are linear, recovery of every pure code state also recovers every mixed code state in an EAQECC.  
These facts yield the following definitions of recovery sets and localities for EAQECCs.

\begin{definition}\label{def.ij-local}
Let $\mathcal Q\subseteq \calH_{Q_{[n]}}\otimes\calH_{B_{[c]}}$ be an EAQECC.  
Let $i\in[n]$ and let $R\subseteq[n]\setminus\{i\}$.  The set $R$ is a \emph{recovery set} for $Q_i$
if there exists a quantum channel
$$\Rec_{i,R}:\calM(Q_{\{i\}\cup R}\cup B_{[c]}) \longrightarrow \calM(Q_{\{i\}\cup R}\cup B_{[c]})$$
such that, for every pure code state $\ket{\psi}\in\mathcal Q$, one has
\begin{equation}\label{eq:~ij-recovery} (\Rec_{i,R}\otimes\id_{Q_{[n]\setminus(\{i\}\cup R)}}) \bigl(\Gamma_i(\ket{\psi}\bra{\psi})\bigr) =\ket{\psi}\bra{\psi}. \end{equation}
\end{definition}

\begin{definition}\label{def.eaqlrc}
An $\dsb{n,\kappa,\delta;c}_q$ EAQECC
$\mathcal{Q}$ is said to have {\em locality $r$} if, for every $i\in[n]$,
there exists $R_i\subseteq[n]\setminus\{i\}$ with $|R_i|\le r$ such that $R_i$
is a recovery set for $Q_i$.
We also call $\mathcal{Q}$ an $\dsb{n,\kappa,\delta;c}_q$ {\em entanglement-assisted quantum locally recoverable code (EAQLRC) 
with locality $r$}.  
\end{definition}

\subsection{A characterization of EASCs with locality}

In the following theorem, we characterize sufficient conditions for an EASC to have 
locality $r$ in terms of properties of its extended stabilizers.  
The proof constructs a local recovery channel by distinguishing all Pauli
errors on an erased transmitted qudit through their commutation relations
with suitable extended stabilizers.

\begin{theorem}\label{thm.stabilizer-locality}
Let $\mathcal Q\subseteq \mathcal H_{Q_{[n]}}\otimes \mathcal H_{B_{[c]}}$
be an EASC such that $L(S_{\rm EA})$ is $\F_q$-linear.  Suppose that for
every $i\in[n]$, there exist $G_i,K_i\in S_{\rm EA}$ whose respective Pauli
labels are
$${\rm lab}(G_i) = (\vect a_i,\vect r_i\mid\vect b_i,\vect s_i)~{\rm and}~{\rm lab}(K_i) = (\vect u_i,\vect t_i\mid\vect v_i,\vect w_i),$$
where
$\vect a_i,\vect b_i,\vect u_i,\vect v_i\in\F_q^n$ and
$\vect r_i,\vect s_i,\vect t_i,\vect w_i\in\F_q^c$, and such that
\begin{enumerate}
  \item [\rm 1)]
  $\bigl((\vect a_i)_i,(\vect b_i)_i\bigr)=(1,0)$ and
  $\bigl((\vect u_i)_i,(\vect v_i)_i\bigr)=(0,1)$.

  \item [\rm 2)]
  $\left|
 (\supp_Q(G_i)\cup\supp_Q(K_i))
  \setminus\{i\}
  \right|\le r$.
\end{enumerate}
Then $\mathcal Q$ is an EAQLRC with locality $r$.
\end{theorem}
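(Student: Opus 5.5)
Fix $i\in[n]$ and put $R=(\supp_Q(G_i)\cup\supp_Q(K_i))\setminus\{i\}$, so $|R|\le r$ by condition 2). The plan is to build $\Rec_{i,R}$ as a syndrome-measure-and-correct channel acting only on $Q_{\{i\}\cup R}\cup B_{[c]}$. The first point is locality of the checks: $G_i$ and $K_i$ act as the identity on $Q_{[n]\setminus(\{i\}\cup R)}$, so they may be viewed as unitaries $\tilde G_i,\tilde K_i$ on $\calH_{Q_{\{i\}\cup R}}\otimes\calH_{B_{[c]}}$ tensored with identity. By Lemma~\ref{lem.single-qudit-erasure-channel}, $\Gamma_i(\ket{\psi}\bra{\psi})$ is the uniform mixture of $E_{a,b}\ket{\psi}\bra{\psi}E_{a,b}^\dagger$ with $E_{a,b}=X_i(a)Z_i(b)$. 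It therefore suffices to find a channel on the local register that maps each $E_{a,b}\ket{\psi}$ back to $\ket{\psi}$, for all $(a,b)\in\F_q^2$ simultaneously, with a common Kraus representation.

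Next I compute syndromes. By \eqref{eq:pauli-commutation} and condition 1), $G_iE_{a,b}=\omega_p^{{\rm tr}(\alpha)}E_{a,b}G_i$, where $\alpha$ is determined by the $i$-th coordinates $(1,0)$ of ${\rm lab}(G_i)$; up to sign it is $\pm b$. Similarly $K_iE_{a,b}=\omega_p^{{\rm tr}(\pm a)}E_{a,b}K_i$. Hence $E_{a,b}\ket{\psi}$ is an eigenvector of $G_i$ with eigenvalue $\omega_p^{{\rm tr}(\pm b)}$ and of $K_i$ with eigenvalue $\omega_p^{{\rm tr}(\pm a)}$. A single eigenvalue of $G_i$ only determines ${\rm tr}(b)$, not $b$, so the $\F_q$-linearity of $L(S_{\rm EA})$ is essential here. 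For each $\lambda\in\F_q$ the element $G_i^{(\lambda)}$ with label $\lambda\,{\rm lab}(G_i)$ lies in $S_{\rm EA}$ and has the same support. Its eigenvalue on $E_{a,b}\ket\psi$ is $\omega_p^{{\rm tr}(\pm\lambda b)}$. As $\lambda$ ranges over $\F_q$, these values determine $b$ through the nondegeneracy of the trace form. The same holds for $a$ via the $K_i^{(\lambda)}$.

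Finally I define the channel. The commuting unitaries $\{\tilde G_i^{(\lambda)},\tilde K_i^{(\mu)}\}$ generate an abelian group $A_i\cong\F_q^2$ (labels add, and phases are trivial because labels determine elements uniquely). Using characters of this group, I form the orthogonal projectors $P_{a,b}=q^{-2}\sum_{\lambda,\mu}\overline{\chi_{a,b}(\lambda,\mu)}\,\tilde G_i^{(\lambda)}\tilde K_i^{(\mu)}$ onto the joint eigenspace corresponding to error $(a,b)$. These satisfy $\sum_{a,b}P_{a,b}=I$ because the characters of $\F_q^2$ are complete. I then set Kraus operators $K_{a,b}=\tilde E_{a,b}^\dagger P_{a,b}$, where $\tilde E_{a,b}$ is the restriction of $E_{a,b}$ to $Q_i$. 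This gives $\sum K_{a,b}^\dagger K_{a,b}=\sum P_{a,b}=I$, so the map is CPTP. Since $E_{a,b}\ket\psi$ lies in the range of $P_{a,b}\otimes I$ and is orthogonal to the other projectors, applying $\Rec_{i,R}\otimes\id$ to each term of the mixture returns $\ket\psi\bra\psi$, which verifies \eqref{eq:~ij-recovery}. The main obstacle I expect is the bookkeeping in this step: fixing sign conventions so that the map $(a,b)\mapsto$ syndrome character is a bijection onto $\widehat{A_i}$, and checking that $\lambda\mapsto G_i^{(\lambda)}$ is a group homomorphism with no stray phases. I would handle the latter with the label-uniqueness argument given after Definition~\ref{def:ea-stabilizer-code}.
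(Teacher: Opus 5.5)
Your proposal is correct and follows essentially the same route as the paper. You restrict to $Q_{\{i\}\cup R}\cup B_{[c]}$ and use $\F_q$-linearity together with label uniqueness to obtain the scaled stabilizers $G_i^{(\lambda)},K_i^{(\mu)}$ as a genuine homomorphic image of $\F_q^2$, rely on nondegeneracy of the trace form so that the syndrome of $X_i(a)Z_i(b)$ determines $(a,b)$, and then correct with Kraus operators $\tilde E_{a,b}^\dagger P_{a,b}$. The only difference is cosmetic: you write the joint-eigenspace projectors as character sums over the whole group $\{G_i^{(\lambda)}K_i^{(\mu)}\}\cong\F_q^2$, whereas the paper uses only the $2m$ generators coming from an $\F_p$-basis $\eta_1,\ldots,\eta_m$ of $\F_q$ and takes products of cyclic projectors $\frac1p\sum_{\ell}\omega_p^{-\ell s_G}G^{\ell}$; expanding that product gives exactly your $P_{a,b}$.
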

\begin{proof}
For a given index $i\in[n]$, let
$R_i=(\supp_Q(G_i)\cup\supp_Q(K_i))\setminus\{i\}$.
Then $|R_i|\le r$ by condition {\rm 2)}.
Let $q=p^m$ and let $\eta_1,\eta_2,\ldots,\eta_m$ be an
$\F_p$-basis of $\F_q$.  As shown after
Definition~\ref{def:ea-stabilizer-code}, the label map is injective on
$S_{\rm EA}$.
Since $L(S_{\rm EA})$ is $\F_q$-linear, for each $j\in[m]$ there are unique
$G_{i,j},K_{i,j}\in S_{\rm EA}$ satisfying
$${\rm lab}(G_{i,j}) = \eta_j(\vect a_i,\vect r_i\mid\vect b_i,\vect s_i)~{\rm and}~{\rm lab}(K_{i,j}) = \eta_j(\vect u_i,\vect t_i\mid\vect v_i,\vect w_i).$$
Let
$\mathcal S_i=\{G_{i,j},K_{i,j}:~1\le j\le m\}$.
Their Pauli labels at $Q_i$ are respectively $(\eta_j,0)$ and $(0,\eta_j)$,
implying that the $2m$ operators in $\mathcal S_i$ are distinct.  Moreover, every
operator in $\mathcal S_i$ acts trivially on the transmitted qudits outside
$\{i\}\cup R_i$.

Define
$\mathcal E_i=\{E_{x,z}=X_i(x)Z_i(z):~x,z\in\F_q\}$, where every
$E_{x,z}$ acts as the identity on
$Q_{[n]\setminus\{i\}}\cup B_{[c]}$.  For Pauli operators $A$ and $E$, define
$\sigma(A,E)\in\F_p$ by
$AE=\omega_p^{\sigma(A,E)}EA$.
Note also that the exponent $\sigma(A,E)$ is unchanged if either operator is multiplied by a scalar
phase and hence depends only on their Pauli labels.
Order $\mathcal S_i$ as
$(G_{i,1},G_{i,2},\ldots,G_{i,m},
K_{i,1},K_{i,2},\ldots,K_{i,m})$ and define
$$
\begin{aligned}
\sigma_{\mathcal S_i}(E)
=\bigl(\sigma(G_{i,1},E),\sigma(G_{i,2},E),\ldots,\sigma(G_{i,m},E),
\sigma(K_{i,1},E),\sigma(K_{i,2},E),\ldots,\sigma(K_{i,m},E)\bigr)
\in\F_p^{2m}.
\end{aligned}
$$
By the $q$-ary Pauli commutation relation in
\eqref{eq:pauli-commutation} and condition {\rm 1)}, we have
$\sigma(G_{i,j},E_{x,z})=-{\rm tr}(\eta_jz)$ and
$\sigma(K_{i,j},E_{x,z})={\rm tr}(\eta_jx)$, where
${\rm tr}$ is the field trace from $\F_q$ to $\F_p$.  
It then follows that
$$\sigma_{\mathcal S_i}(E_{x,z}) = \bigl( -{\rm tr}(\eta_1z),-{\rm tr}(\eta_2z), \ldots,-{\rm tr}(\eta_mz), {\rm tr}(\eta_1x),{\rm tr}(\eta_2x), \ldots,{\rm tr}(\eta_mx) \bigr).$$ 
If
$\sigma_{\mathcal S_i}(E_{x,z})
=\sigma_{\mathcal S_i}(E_{x',z'})$,
then
${\rm tr}(\eta_j(z-z'))=0$ and
${\rm tr}(\eta_j(x-x'))=0$ for every $j\in[m]$.
Since $\{\eta_1,\eta_2,\ldots,\eta_m\}$ is an $\F_p$-basis and the field trace is
$\F_p$-linear, we deduce that 
${\rm tr}(\lambda(z-z'))=0$ and
${\rm tr}(\lambda(x-x'))=0$ for every $\lambda\in\F_q$. 
In addition, ${\rm tr}(\lambda\mu)=0$ for every $\lambda\in\F_q$ only if $\mu=0$, 
as the bilinear form $(\lambda,\mu)\mapsto{\rm tr}(\lambda\mu)$ is nondegenerate. 
Hence, $z=z'$ and $x=x'$, that is to say the map $E\mapsto\sigma_{\mathcal S_i}(E)$ is injective on $\mathcal E_i$.
Since $|\mathcal E_i|=q^2=p^{2m}=|\F_p^{2m}|$, this map is bijective.

By Definition~\ref{def:ea-stabilizer-code}, the operators in
$\mathcal S_i$ commute and fix $\mathcal Q$ pointwise. Since they are
Pauli operators, they are unitary. Moreover, for every
$G\in\mathcal S_i$, we have $G^p\in S_{\rm EA}$ and
$
{\rm lab}(G^p)
=p{\rm lab}(G)
=0
={\rm lab}(I).
$
The injectivity of the label map on $S_{\rm EA}$ therefore gives
$G^p=I$.
Each $G\in\mathcal S_i$ acts trivially on
$Q_{[n]\setminus(\{i\}\cup R_i)}$. Hence, including its scalar phase,
$G$ is the identity extension of an operator on
$\calH_{Q_{\{i\}\cup R_i}\cup B_{[c]}}$.
We henceforth use the same symbol $G$ for this local operator; whenever
it acts on the full system, its identity action on
$Q_{[n]\setminus(\{i\}\cup R_i)}$ is understood. 
For $\mathbf s=(s_G)_{G\in\mathcal S_i}\in\F_p^{2m}$, let
$\Pi_{\mathbf s}$ be the orthogonal projector onto
$$
\left\{
\ket{\phi}\in\calH_{Q_{\{i\}\cup R_i}\cup B_{[c]}}
:~
G\ket{\phi}=\omega_p^{s_G}\ket{\phi}
\text{ for every }G\in\mathcal S_i
\right\}.
$$
Since the operators in $\mathcal S_i$ commute and satisfy $G^p=I$,
this projector has the explicit expression
$
\Pi_{\mathbf s}
=
\prod_{G\in\mathcal S_i}
\left(
\frac1p\sum_{\ell=0}^{p-1}
\omega_p^{-\ell s_G}G^\ell
\right).
$
The nonzero joint eigenspaces of the commuting unitaries in
$\mathcal S_i$ form an orthogonal decomposition of
$\calH_{Q_{\{i\}\cup R_i}\cup B_{[c]}}$. 
This yields that 
$$
\sum_{\mathbf s\in\F_p^{2m}}\Pi_{\mathbf s}
=
I_{Q_{\{i\}\cup R_i}\cup B_{[c]}}.
$$

We now use the projectors $\Pi_{\mathbf s}$ to construct a local
recovery channel. Since the map
$E\mapsto\sigma_{\mathcal S_i}(E)$ is bijective, for every
$\mathbf s\in\F_p^{2m}$ there exists a unique
$E_{\mathbf s}\in\mathcal E_i$ such that
$\sigma_{\mathcal S_i}(E_{\mathbf s})=\mathbf s$.
Let $C_{\mathbf s}=E_{\mathbf s}^\dagger$, regarded as an operator on
$\calH_{Q_{\{i\}\cup R_i}\cup B_{[c]}}$.
It acts nontrivially only on $Q_i$ and as the identity on
$Q_{R_i}\cup B_{[c]}$.
For $\tau\in \calM(Q_{\{i\}\cup R_i}\cup B_{[c]})$, define 
\begin{align}\label{eq.recover-channel-EASC}
  \Rec_{i,R_i}(\tau)
=
\sum_{\mathbf s\in\F_p^{2m}}
C_{\mathbf s}\Pi_{\mathbf s}\tau
\Pi_{\mathbf s}C_{\mathbf s}^\dagger,
\end{align}
which is clearly linear in $\tau$. 
Since $E_{\mathbf s}$ is a Pauli operator,
$C_{\mathbf s}$ is unitary and satisfies
$
C_{\mathbf s}^\dagger C_{\mathbf s}
=
I_{Q_{\{i\}\cup R_i}\cup B_{[c]}}.
$
Moreover, since $\Pi_{\mathbf s}$ is an orthogonal projector,
$\Pi_{\mathbf s}^\dagger=\Pi_{\mathbf s}^2=\Pi_{\mathbf s}$.
It follows that
$$
\sum_{\mathbf s\in\F_p^{2m}}
(C_{\mathbf s}\Pi_{\mathbf s})^\dagger
(C_{\mathbf s}\Pi_{\mathbf s})
=
\sum_{\mathbf s\in\F_p^{2m}}
\Pi_{\mathbf s}C_{\mathbf s}^\dagger C_{\mathbf s}\Pi_{\mathbf s}
=
\sum_{\mathbf s\in\F_p^{2m}}\Pi_{\mathbf s}
=
I_{Q_{\{i\}\cup R_i}\cup B_{[c]}}.
$$
Thus, $\Rec_{i,R_i}$ has Kraus operators $\{C_{\mathbf s}\Pi_{\mathbf s}:~\mathbf s\in\F_p^{2m}\}$ 
satisfying the conditions in \eqref{eq.quantum-channel}, and hence is a quantum channel.

Fix a unit vector $\ket{\psi}\in\mathcal Q$ and let $\rho_\psi=\ket{\psi}\bra{\psi}$.
Since every operator in $\mathcal S_i$ fixes $\ket{\psi}$, for every
$E\in\mathcal E_i$ and $G\in\mathcal S_i$, we have
$$
GE\ket{\psi}
=
\omega_p^{\sigma(G,E)}EG\ket{\psi}
=
\omega_p^{\sigma(G,E)}E\ket{\psi}.
$$
It implies that the corrupted state $E\ket{\psi}$ belongs to the joint eigenspace indexed by
$\sigma_{\mathcal S_i}(E)$. Moreover, it can be checked that 
$$
\bigl(
\Pi_{\mathbf s}\otimes
I_{Q_{[n]\setminus(\{i\}\cup R_i)}}
\bigr)E\ket{\psi}
=
\begin{cases}
E\ket{\psi},
&\mathbf s=\sigma_{\mathcal S_i}(E),\\
0,
&\mathbf s\ne\sigma_{\mathcal S_i}(E) 
\end{cases}
$$
holds for  every $\mathbf s\in\F_p^{2m}$. 
By the bijectivity of $E\mapsto\sigma_{\mathcal S_i}(E)$, we also derive that 
$E_{\sigma_{\mathcal S_i}(E)}=E$ and
$C_{\sigma_{\mathcal S_i}(E)}=E^\dagger$.
Hence, only the term indexed by
$\mathbf s=\sigma_{\mathcal S_i}(E)$ remains in the definition of
$\Rec_{i,R_i}$ in \eqref{eq.recover-channel-EASC}. It then follows that
$$
\begin{aligned}
&\bigl(
\Rec_{i,R_i}\otimes
\id_{Q_{[n]\setminus(\{i\}\cup R_i)}}
\bigr)
(E\rho_\psi E^\dagger)=E^\dagger E\rho_\psi E^\dagger E=\rho_\psi
\end{aligned}
$$
for every $E\in\mathcal E_i$.
Using \eqref{eq:~local-erasure-channel} and the linearity of
$\Rec_{i,R_i}$, we immediately obtain
$$
\begin{aligned}
\bigl(
\Rec_{i,R_i}\otimes
\id_{Q_{[n]\setminus(\{i\}\cup R_i)}}
\bigr)
\bigl(\Gamma_i(\rho_\psi)\bigr)
=  
\frac1{q^2}
\sum_{E\in\mathcal E_i}
\bigl(
\Rec_{i,R_i}\otimes
\id_{Q_{[n]\setminus(\{i\}\cup R_i)}}
\bigr)
(E\rho_\psi E^\dagger)
=  
\frac1{q^2}
\sum_{E\in\mathcal E_i}\rho_\psi
= 
\rho_\psi,
\end{aligned}
$$
where the last equality follows from $|\mathcal E_i|=q^2$.
This is precisely the recovery condition
\eqref{eq:~ij-recovery} with $R=R_i$. Hence, $R_i$ is a recovery set
for $Q_i$.
Since $i$ is arbitrary and $|R_i|\le r$, 
it turns out from Definition~\ref{def.eaqlrc} that $\mathcal Q$ is an EAQLRC with locality $r$.
This completes the proof. 
\end{proof}

\subsection{CSS-like constructions of EAQLRCs}

Recall that the label set of the extended stabilizer group $S_{\rm EA}$ in \eqref{eq:~ea-css-extended-stabilizer} 
of an EA-CSS code must be $\F_q$-linear.  
In this case, the extended stabilizers $G_i,~K_i\in S_{\rm EA}$ used in Theorem~\ref{thm.stabilizer-locality} 
can be characterized by dual codewords of the underlying classical codes $\mathcal C_X$ and $\mathcal C_Z$.

\begin{theorem}\label{thm.eacss}
Let $\mathcal{C}_X$ and $\mathcal{C}_Z$ be $[n,k_X,d_X]_q$ and
$[n,k_Z,d_Z]_q$ linear codes with parity-check matrices
$\mat{H}_X$ and $\mat{H}_Z$, respectively.
Assume that, for every $i\in[n]$, there exist
$\vect{h}_i^X\in\mathcal C_X^\perp$ and
$\vect{h}_i^Z\in\mathcal C_Z^\perp$ such that 
\begin{align}\label{eq.condition}
  i\in\supp(\vect{h}_i^X)\cap\supp(\vect{h}_i^Z) ~{\rm and}~ |(\supp(\vect{h}_i^X)\cup\supp(\vect{h}_i^Z))|\le r+1.
\end{align}
Then the EA-CSS code
$\mathcal{Q}(\mathcal{C}_X,\mathcal{C}_Z)$ is an
$\dsb{n,\kappa,\delta;c}_q$ EAQLRC of locality $r$, 
where the specific parameters are given in \eqref{eq.eaqecc-css}. 
\end{theorem}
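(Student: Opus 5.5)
The plan is to reduce the statement to Theorem~\ref{thm.stabilizer-locality}, applied to the extended stabilizer group $S_{\rm EA}$ in \eqref{eq:~ea-css-extended-stabilizer}. The parameters $\dsb{n,\kappa,\delta;c}_q$ are already supplied by Lemma~\ref{lem:CSS-construction}, so only the locality claim needs proof. As noted after \eqref{eq:~ea-css-extended-stabilizer}, $S_{\rm EA}$ is closed under multiplication with trivial phases, and $L(S_{\rm EA})=\{(\vect u\mat H_X,\vect u\mat F_X\mid\vect v\mat H_Z,\vect v\mat F_Z)\}$ is $\F_q$-linear. Hence the standing hypothesis of Theorem~\ref{thm.stabilizer-locality} holds, and for each $i\in[n]$ it remains to produce $G_i,K_i\in S_{\rm EA}$ satisfying conditions 1) and 2) there.

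Fix $i$. Since $\row(\mat H_X)=\C_X^\perp$, there is $\vect u\in\F_q^{n-k_X}$ with $\vect u\mat H_X=\vect h_i^X$. Similarly, there is $\vect v\in\F_q^{n-k_Z}$ with $\vect v\mat H_Z=\vect h_i^Z$. Because $i\in\supp(\vect h_i^X)\cap\supp(\vect h_i^Z)$, the scalars $\alpha=(\vect h_i^X)_i$ and $\beta=(\vect h_i^Z)_i$ are nonzero. Set $G_i=G(\alpha^{-1}\vect u,\vect 0)$ and $K_i=G(\vect 0,\beta^{-1}\vect v)$, so that
$${\rm lab}(G_i)=(\alpha^{-1}\vect h_i^X,\alpha^{-1}\vect u\mat F_X\mid\vect 0,\vect 0)~{\rm and}~{\rm lab}(K_i)=(\vect 0,\vect 0\mid\beta^{-1}\vect h_i^Z,\beta^{-1}\vect v\mat F_Z).$$
The $Q_i$-components of these labels are $(1,0)$ and $(0,1)$, which gives condition 1).

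For condition 2), note that $\supp_Q(G_i)=\supp(\vect h_i^X)$ and $\supp_Q(K_i)=\supp(\vect h_i^Z)$; the Bob components do not count, since $\supp_Q$ records only transmitted positions. By \eqref{eq.condition} the union of these supports has size at most $r+1$ and contains $i$, so removing $i$ leaves at most $r$ indices. Theorem~\ref{thm.stabilizer-locality} then gives locality $r$.

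The only step needing care is confirming that the chosen operators are genuinely elements of $S_{\rm EA}$ with exactly the stated labels. This requires that scaling $\vect u$ by $\alpha^{-1}$ stays inside the parametrization of \eqref{eq:~ea-css-extended-stabilizer}, which holds because the parametrization is $\F_q$-linear in $(\vect u,\vect v)$ with no phase factors. It also requires that the ordering convention ${\rm lab}=(\vect a,\vect r\mid\vect b,\vect s)$ places the $X$-part first, which matches condition 1). Beyond this the argument is a direct verification.
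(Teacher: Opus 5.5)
Your proposal is correct and matches the paper's proof: you reduce to Theorem~\ref{thm.stabilizer-locality} using the $\F_q$-linearity of $L(S_{\rm EA})$. You then take $G_i$ and $K_i$ from \eqref{eq:~ea-css-extended-stabilizer} with the coefficient vectors of $\vect h_i^X$ and $\vect h_i^Z$ (and the other argument zero), normalized so the $i$-th entries are $1$, and check conditions 1) and 2) directly. The paper does the same, except it rescales $\vect h_i^X,\vect h_i^Z$ before picking coefficient vectors rather than after, which makes no difference.
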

\begin{proof}
By Lemma~\ref{lem:CSS-construction}, the pair
$(\mathcal{C}_X,\mathcal{C}_Z)$ gives an EA-CSS code
$\mathcal{Q}(\mathcal{C}_X,\mathcal{C}_Z)$ with parameters as in
\eqref{eq.eaqecc-css} and extended stabilizer group $S_{\rm EA}$ as in
\eqref{eq:~ea-css-extended-stabilizer}. 
Moreover, $S_{\rm EA}$ has an $\F_q$-linear Pauli label set. 
To verify locality, it remains to find extended stabilizers $G_i,K_i\in S_{\rm EA}$ 
satisfying the conditions 1) and 2) of Theorem~\ref{thm.stabilizer-locality} for each $i\in[n]$. 

For a given $i\in[n]$, by assumption, we can choose 
$\vect{h}_i^X\in\mathcal{C}_X^\perp$ and
$\vect{h}_i^Z\in\mathcal{C}_Z^\perp$ such that
$i\in\supp(\vect{h}_i^X)\cap\supp(\vect{h}_i^Z)$ and
$|(\supp(\vect{h}_i^X)\cup\supp(\vect{h}_i^Z))|\le r+1$.
Multiplying $\vect{h}_i^X$ and $\vect{h}_i^Z$ by nonzero scalars if
necessary, we may assume that
$(\vect{h}_i^X)_i=(\vect{h}_i^Z)_i=1$.
Since $\mat H_X$ and $\mat H_Z$ are parity-check matrices, there exist
coefficient vectors $\vect{y}_i^X\in\F_q^{n-k_X}$ and
$\vect{y}_i^Z\in\F_q^{n-k_Z}$ such that
$
\vect{h}_i^X=\vect{y}_i^X\mat H_X
~{\rm and}~
\vect{h}_i^Z=\vect{y}_i^Z\mat H_Z.
$
Taking
$(\vect{u},\vect{v})
=(\vect{y}_i^X,\vect{0}_{n-k_Z})$ and
$(\vect{u},\vect{v})
=(\vect{0}_{n-k_X},\vect{y}_i^Z)$, respectively, in
\eqref{eq:~ea-css-extended-stabilizer} gives the extended stabilizers
$$
G_i=
X_Q(\vect{h}_i^X)Z_Q(\vect{0}_n)
\otimes
X_B(\vect{y}_i^X\mat F_X)Z_B(\vect{0}_c)
~{\rm and}~
K_i=
X_Q(\vect{0}_n)Z_Q(\vect{h}_i^Z)
\otimes
X_B(\vect{0}_c)Z_B(\vect{y}_i^Z\mat F_Z).
$$
Their respective Pauli labels are
$
{\rm lab}(G_i)
=
(\vect{h}_i^X,\vect{y}_i^X\mat F_X
\mid\vect{0}_n,\vect{0}_c)
~{\rm and}~
{\rm lab}(K_i)
=
(\vect{0}_n,\vect{0}_c
\mid\vect{h}_i^Z,\vect{y}_i^Z\mat F_Z).
$
Therefore, it is easy to check that 
$$
\bigl((\vect{h}_i^X)_i,({\bf 0}_n)_i\bigr)=(1,0)
~{\rm and}~
\bigl(({\bf 0}_n)_i,(\vect{h}_i^Z)_i\bigr)=(0,1),
$$
and 
$$
\supp_Q(G_i)=\supp(\vect{h}_i^X)
~{\rm and}~
\supp_Q(K_i)=\supp(\vect{h}_i^Z).
$$
Since
$i\in\supp(\vect{h}_i^X)\cap\supp(\vect{h}_i^Z)$, we obtain
$$
\left|
\bigl(\supp_Q(G_i)\cup\supp_Q(K_i)\bigr)\setminus\{i\}
\right|
=
\left|
\supp(\vect{h}_i^X)\cup\supp(\vect{h}_i^Z)
\right|-1
\le r.
$$
We conclude that both the conditions {\rm 1) and 2)} of Theorem~\ref{thm.stabilizer-locality} hold. 
By Theorem~\ref{thm.stabilizer-locality}, $\mathcal Q(\mathcal C_X,\mathcal C_Z)$ is an EAQLRC with locality $r$.
\end{proof}

\begin{corollary}\label{cor.symmetric-eacss}
Let $\mathcal C$ be an $[n,k,d]_q$ linear code with locality $r$ and $\ell$-dimensional Euclidean hull.
Then the EA-CSS code $\mathcal Q(\mathcal C,\mathcal C)$ is an
$\dsb{n,k-\ell,\delta;n-k-\ell}_q$ EAQLRC of locality $r$, where
\begin{equation*}
\delta=
\begin{cases}
d, & \text{if } \mathcal C\subseteq\mathcal C^\perp,\\
\wt\bigl(\mathcal C\setminus(\mathcal C\cap\mathcal C^\perp)\bigr),
& \text{otherwise}.
\end{cases}
\end{equation*}
Moreover, the EAQLRC $\mathcal Q(\mathcal C,\mathcal C)$ is pure if $\delta=d$ and impure otherwise.
\end{corollary}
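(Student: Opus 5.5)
The plan is to specialize Lemma~\ref{lem:CSS-construction} and Theorem~\ref{thm.eacss} to $\mathcal C_X=\mathcal C_Z=\mathcal C$, so that $k_X=k_Z=k$ and $d_X=d_Z=d$. This splits into three steps: compute the parameters, read off the distance, and verify the support condition \eqref{eq.condition}.

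\textbf{Parameters.} With $\mathcal C_X=\mathcal C_Z=\mathcal C$ we have $\mathcal C_X^\perp\cap\mathcal C_Z=\mathcal C^\perp\cap\mathcal C=\Hull(\mathcal C)$, which has dimension $\ell$. Then \eqref{eq.eaqecc-css} gives
\[
c=n-k-\ell \quad\text{and}\quad \kappa=2k-n+c=k-\ell .
\]
The second expression for $c$ in \eqref{eq.eaqecc-css} yields the same value, since $\mathcal C_X\cap\mathcal C_Z^\perp$ is again the hull. Alternatively, Lemma~\ref{lem.defect} gives $c=\rank(\mat H\mat H^\top)=n-k-\ell$ directly.

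\textbf{Distance and purity.} For $\delta$, the condition $\mathcal C_X\subseteq\mathcal C_Z^\perp$ becomes $\mathcal C\subseteq\mathcal C^\perp$, and in that case $\delta=\min\{d,d\}=d$. Otherwise both sets in the second line of \eqref{eq.eaqecc-css} equal $\mathcal C\setminus(\mathcal C\cap\mathcal C^\perp)$, so the minimum collapses to a single weight, $\wt\bigl(\mathcal C\setminus(\mathcal C\cap\mathcal C^\perp)\bigr)$. The purity statement is then just the definition in Lemma~\ref{lem:CSS-construction}, namely pure exactly when $\delta=\min\{d_X,d_Z\}=d$.

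\textbf{Locality.} This is the only step that requires care. Since $\mathcal C$ has locality $r$, I would use the dual-codeword description of Lemma~\ref{lem.classical-locality-dual}: for each $i$, there is $\vect h_i\in\mathcal C^\perp$ with $i\in\supp(\vect h_i)$ and $|\supp(\vect h_i)|\le r+1$. Taking $\vect h_i^X=\vect h_i^Z=\vect h_i$, the union of the two supports is just $\supp(\vect h_i)$, so \eqref{eq.condition} holds, and Theorem~\ref{thm.eacss} gives locality $r$.

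The obstacle is that Lemma~\ref{lem.classical-locality-dual} is stated only as a sufficient condition, whereas here I need the converse: a linear code with locality $r$ has such a dual codeword. I would justify this with the standard argument. If $c_i=f_i(c_{i_1},\ldots,c_{i_{s_i}})$ for all codewords, then $c_i$ is determined by the coordinates in $I_i$. By linearity, the projection of $\mathcal C$ onto $I_i\cup\{i\}$ has the $i$th coordinate as a linear function of the others; otherwise some codeword would vanish on $I_i$ but not at $i$, and adding it to another codeword would contradict recoverability. This yields a vector $\vect h_i\in\mathcal C^\perp$ supported in $I_i\cup\{i\}$ with nonzero $i$th entry.

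This degenerates if coordinate $i$ is identically zero on $\mathcal C$. In that case $\vect e_i\in\mathcal C^\perp$ itself works, with support of size $1\le r+1$.
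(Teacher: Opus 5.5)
Your proposal is correct and follows the paper's proof, which likewise specializes Lemma~\ref{lem:CSS-construction} to $\mathcal C_X=\mathcal C_Z=\mathcal C$ for the parameters and then combines Lemma~\ref{lem.classical-locality-dual} with Theorem~\ref{thm.eacss} via $\vect h_i^X=\vect h_i^Z=\vect h_i$. You also justify the converse direction of Lemma~\ref{lem.classical-locality-dual}: a linear code with locality $r$ has, for each $i$, a dual codeword of weight at most $r+1$ whose support contains $i$. The paper uses this direction without comment, and your argument for it is sound. The identically-zero-coordinate case you single out is not actually separate, because the induced linear functional is then zero and your construction already returns $\vect e_i$.
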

\begin{proof}

The desired result follows directly from combining with 
Lemma \ref{lem.classical-locality-dual} and Theorem~\ref{thm.eacss}.  
\end{proof}

\section{Two bounds for CSS-like EAQLRCs}\label{sec:bounds}

In this section, we derive two bounds for CSS-like EAQLRCs obtained from Theorem~\ref{thm.eacss}. 
The first is an upper bound on the locality of a CSS-like EAQLRC, 
which is independent of its distance and entanglement consumption.

\begin{theorem}\label{thm.elementary-locality}
Let $\mathcal C_X$ and $\mathcal C_Z$ be $[n,k_X,d_X]_q$ and
$[n,k_Z,d_Z]_q$ linear codes, respectively, where $d_X,d_Z\ge2$.
Let $r$ denote the minimum locality of the CSS-like EAQLRC
$\mathcal Q(\mathcal C_X,\mathcal C_Z)$ obtained from
Theorem~\ref{thm.eacss}. Then
$$r\le\min\{n-1,k_X+k_Z-\dim(\mathcal C_X\cap\mathcal C_Z)\}.$$
\end{theorem}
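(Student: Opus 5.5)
The plan is to prove the two bounds $r\le n-1$ and $r\le k_X+k_Z-\dim(\mathcal C_X\cap\mathcal C_Z)$ separately. In each case, for every $i\in[n]$ I exhibit $\vect h_i^X\in\mathcal C_X^\perp$ and $\vect h_i^Z\in\mathcal C_Z^\perp$ satisfying \eqref{eq.condition} with the corresponding value of $r$. Theorem~\ref{thm.eacss} then shows that $\mathcal Q(\mathcal C_X,\mathcal C_Z)$ has that locality, so the minimum locality is at most that value. Throughout I work with the column matroids of generator matrices: a set $J\subseteq[n]$ is an \emph{information set} of a code $\mathcal D$ if the projection of $\mathcal D$ onto the coordinates in $J$ is bijective.

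The basic fact I will use is the following. Because $d_X\ge2$, the unit vector $\vect e_i$ does not lie in $\mathcal C_X=(\mathcal C_X^\perp)^\perp$. Hence some vector of $\mathcal C_X^\perp$ is nonzero at coordinate $i$, and the same holds for $\mathcal C_Z^\perp$. Equivalently, the columns of a generator matrix of $\mathcal C_X$ indexed by $[n]\setminus\{i\}$ still have full rank $k_X$, and similarly for $\mathcal C_Z$.

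For the bound $n-1$, I take any $\vect h_i^X\in\mathcal C_X^\perp$ and $\vect h_i^Z\in\mathcal C_Z^\perp$ that are nonzero at $i$. Their union support has size at most $n$, so \eqref{eq.condition} holds with $r=n-1$.

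For the second bound, let $W=\mathcal C_X\cap\mathcal C_Z$ and $w=\dim W$. Since $W\subseteq\mathcal C_X$, we have $\vect e_i\notin W$, so $W$ has an information set $T$ with $|T|=w$ and $i\notin T$. Projection of $\mathcal C_X$ onto $T$ is surjective, because it already is on $W$. Hence the columns indexed by $T$ in a generator matrix of $\mathcal C_X$ are independent. By matroid augmentation inside the full-rank column set $[n]\setminus\{i\}$, I extend $T$ to an information set $J_X\supseteq T$ of $\mathcal C_X$ with $i\notin J_X$. In the same way I obtain an information set $J_Z\supseteq T$ of $\mathcal C_Z$ with $i\notin J_Z$.

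Since $i\notin J_X$, the column of index $i$ in a generator matrix of $\mathcal C_X$ is a linear combination of the columns indexed by $J_X$. This gives a vector $\vect h_i^X\in\mathcal C_X^\perp$ with $(\vect h_i^X)_i=1$ and $\supp(\vect h_i^X)\subseteq J_X\cup\{i\}$. Likewise there is $\vect h_i^Z\in\mathcal C_Z^\perp$ with $(\vect h_i^Z)_i=1$ and $\supp(\vect h_i^Z)\subseteq J_Z\cup\{i\}$. Therefore
\[
\bigl|\supp(\vect h_i^X)\cup\supp(\vect h_i^Z)\bigr|\le |J_X\cup J_Z|+1\le k_X+k_Z-w+1 .
\]
By Theorem~\ref{thm.eacss}, $r\le k_X+k_Z-w$, which equals $\dim(\mathcal C_X+\mathcal C_Z)$.

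The main obstacle is that the naive idea of using a single vector of $(\mathcal C_X+\mathcal C_Z)^\perp$ for both $\vect h_i^X$ and $\vect h_i^Z$ fails when $\vect e_i\in\mathcal C_X+\mathcal C_Z$. Forcing both information sets to contain a common information set $T$ of the intersection, while both avoid $i$, is what makes the argument work in that case. The step I expect to need the most care is the simultaneous augmentation of $T$ inside $[n]\setminus\{i\}$.
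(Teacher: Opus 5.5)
Your proposal is correct and follows essentially the same argument as the paper. You choose a $w$-element set avoiding $i$ on which $\mathcal C_X\cap\mathcal C_Z$ projects bijectively, extend it inside $[n]\setminus\{i\}$ to column bases for generator matrices of both $\mathcal C_X$ and $\mathcal C_Z$, and read off the two dual words from the dependence of the $i$th column, which gives $|J_X\cup J_Z|\le k_X+k_Z-w$. The only cosmetic difference is that you prove the $n-1$ bound separately, whereas the paper notes that the same pair of dual words already has union support inside $[n]$.
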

\begin{proof}
For any given $i\in[n]$, since $d_X,d_Z\ge2$, the punctured codes
$\C_X'$ and $\C_Z'$ obtained by deleting the $i^{\rm th}$ coordinate of
$\C_X$ and $\C_Z$ have dimensions $k_X$ and $k_Z$, respectively.
Let $t=\dim(\mathcal C_X\cap\mathcal C_Z)$. Since every nonzero codeword of
$\mathcal C_X\cap\mathcal C_Z$ has weight at least $2$, puncturing
$\mathcal C_X\cap\mathcal C_Z$ at the $i^{\rm th}$ coordinate preserves
its dimension. Hence, we can choose a set
$J_i\subseteq[n]\setminus\{i\}$ with $|J_i|=t$ such that the restriction of
$\mathcal C_X\cap\mathcal C_Z$ to $J_i$, denoted by
$(\mathcal C_X\cap\mathcal C_Z)_{J_i}$, has dimension $t$. 
Let $\mat G_X$ and $\mat G_Z$ be generator matrices of $\mathcal C_X$ and
$\mathcal C_Z$, respectively. Since $\mathcal C_X\cap\mathcal C_Z$ is a
subcode of both $\mathcal C_X$ and $\mathcal C_Z$ and
$\dim((\mathcal C_X\cap\mathcal C_Z)_{J_i})=t$, the columns indexed by
$J_i$ are linearly independent in both $\mat G_X$ and $\mat G_Z$.
Extend $J_i$ to column bases
$J_i\subseteq I_i^X\subseteq[n]\setminus\{i\}$ and
$J_i\subseteq I_i^Z\subseteq[n]\setminus\{i\}$ of generator matrices of
the punctured codes $\C_X'$ and $\C_Z'$, respectively. We then derive that
$|I_i^X|=k_X$, $|I_i^Z|=k_Z$, and
$|I_i^X\cup I_i^Z|\le k_X+k_Z-t$.

Since the $i^{\rm th}$ column of each generator matrix is a linear
combination of the columns in the corresponding basis, there exist
$\vect h_i^X\in\mathcal C_X^\perp$ and
$\vect h_i^Z\in\mathcal C_Z^\perp$ such that
$i\in\supp(\vect h_i^X)\cap\supp(\vect h_i^Z)$,
$\supp(\vect h_i^X)\subseteq I_i^X\cup\{i\}$, and
$\supp(\vect h_i^Z)\subseteq I_i^Z\cup\{i\}$.
It then follows that
$$|\supp(\vect h_i^X)\cup\supp(\vect h_i^Z)|\le 1+|I_i^X\cup I_i^Z|\le 1+k_X+k_Z-t=1+\dim(\mathcal C_X+\mathcal C_Z).$$
Moreover,
$\supp(\vect h_i^X)\cup\supp(\vect h_i^Z)\subseteq[n]$. Therefore,
$|\supp(\vect h_i^X)\cup\supp(\vect h_i^Z)|\le 1+\min\{n-1,\dim(\mathcal C_X+\mathcal C_Z)\}.$ 
Set $r_0=\min\{n-1,\dim(\mathcal C_X+\mathcal C_Z)\}.$
The codewords constructed above satisfy the paired-support condition in \eqref{eq.condition} with parameter $r_0$. 
Hence, $\mathcal Q(\mathcal C_X,\mathcal C_Z)$ has locality at most $r_0$.
Since $r$ denotes its minimum locality, we obtain
$$r\le r_0=\min\{n-1,k_X+k_Z-\dim(\mathcal C_X\cap\mathcal C_Z)\}.$$
This completes the proof.
\end{proof}

Note that Theorem~\ref{thm.eacss} and Corollary~\ref{cor.symmetric-eacss} 
allow us to derive the locality of a CSS-like EAQLRC from local properties of the underlying classical codes.  
Since they provide only sufficient conditions for local recovery, the actual
minimum locality of the resulting EAQLRC may be smaller than the guaranteed value.
In what follows, we call a locality provided by either
Theorem~\ref{thm.eacss} or Corollary~\ref{cor.symmetric-eacss}
\emph{trivial} if it coincides with the upper bound in
Theorem~\ref{thm.elementary-locality}, and \emph{nontrivial} if it is
strictly smaller than this bound.

\begin{remark}\label{rem.trivial-MDS}
Let $\mathcal C$ be an $[n,k,n-k+1]_q$ MDS code with
$1\leq k\leq n-1$ and an $\ell$-dimensional Euclidean hull.
Note that $k=n$ is meaningless for locality since
$\C^\perp=\{\vect 0\}$ in this case.
Taking $\mathcal C_X=\mathcal C_Z=\mathcal C$, we have
$\dim(\mathcal C_X\cap\mathcal C_Z)=k$.
By Corollary~\ref{cor.symmetric-eacss}, we can get a pure CSS-like
EAQLRC $\mathcal Q(\C,\C)$ with parameters
$\llbracket n,k-\ell,n-k+1;n-k-\ell\rrbracket_q$ and locality $k$.
Furthermore, Theorem~\ref{thm.elementary-locality} gives
$$k=\min\{n-1,k+k-\dim(\mathcal C_X\cap\mathcal C_Z)\}=\min\{n-1,k\}.$$
This shows that the locality derived from Corollary~\ref{cor.symmetric-eacss} attains 
the upper bound in Theorem~\ref{thm.elementary-locality}.
According to the above convention, such EAQLRCs are regarded as trivial for quantum local recovery.
\end{remark}

To describe the dependence among all parameters of CSS-like EAQLRCs, 
the following theorems present a Singleton-like bound 
and establish a criterion for such EAQLRCs being optimal with respect to this bound.  
The techniques are combinations of those used in \cite[Theorem~16]{LuoEzermanGrasslLing2024} and 
\cite[Theorem~4]{LuoChenEzermanLing2025}.

\begin{theorem}\label{thm.lrc-bounds}
Let $\mathcal Q$ be an $\dsb{n,\kappa,\delta;c}_q$ CSS-like EAQLRC with locality $r$ 
obtained from Theorem \ref{thm.eacss}.   
Then
\begin{equation}\label{eq:singleton-like} 2\delta \le n+c-\kappa-2\left\lceil\frac{\kappa}{r}\right\rceil+4. \end{equation}
\end{theorem}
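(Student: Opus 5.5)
The plan is to split the quantum bound into two classical-type bounds, one for each code, and add them. By Lemma~\ref{lem:CSS-construction} we have $\kappa=k_X+k_Z-n+c$, so $n+c-\kappa=2n-k_X-k_Z$. Hence \eqref{eq:singleton-like} follows by summing the two inequalities
$$\delta\le n-k_X-\left\lceil\frac{\kappa}{r}\right\rceil+2 \quad\text{and}\quad \delta\le n-k_Z-\left\lceil\frac{\kappa}{r}\right\rceil+2.$$
By symmetry it suffices to prove the first one. Two consequences of Lemma~\ref{lem:CSS-construction} will be used throughout. First, $c=n-k_Z-\dim(\mathcal C_X\cap\mathcal C_Z^\perp)$ gives $\dim D=k_X-\kappa$ for $D=\mathcal C_X\cap\mathcal C_Z^\perp$; in particular $\kappa\le k_X$. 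Second, in all cases $\delta\le\wt(\mathcal C_X\setminus D)$ when $D\ne\mathcal C_X$, and $\delta\le d_X$ when $D=\mathcal C_X$. (If $\kappa=0$ the claimed inequality is trivial or handled separately, so assume $\kappa\ge1$, which forces $D\subsetneq\mathcal C_X$.) Condition \eqref{eq.condition} gives, for each $i$, a local group $T_i=\supp(\vect h_i^X)\ni i$ of size at most $r+1$ with $\vect h_i^X\in\mathcal C_X^\perp$. So coordinate $i$ of $\mathcal C_X$ is a linear function of the coordinates in $T_i\setminus\{i\}$.

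To bound $\wt(\mathcal C_X\setminus D)$, I will find a large set $S\subseteq[n]$ and a codeword $\vect x\in\mathcal C_X\setminus D$ vanishing on $S$; then $\wt(\vect x)\le n-|S|$. Write $V_S=\{\vect x\in\mathcal C_X:\vect x|_S=\vect 0\}$. Then $V_S\not\subseteq D$ iff $\dim V_S>\dim(V_S\cap D)$, i.e.
$$k_X-\rank_X(S)>(k_X-\kappa)-\rank_D(S),$$
where $\rank_X(S)=\dim(\mathcal C_X|_S)$ and $\rank_D(S)=\dim(D|_S)$. Set $f(S)=\rank_X(S)-\rank_D(S)=\dim\bigl(\mathcal C_X|_S/D|_S\bigr)$. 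The condition becomes $f(S)\le\kappa-1$. The function $f$ is monotone, and $f(S\cup T)-f(S)\le\rank_X(S\cup T)-\rank_X(S)$. This holds because the restriction map $D|_{S\cup T}\to D|_S$ is surjective, so $\rank_D$ does not decrease.

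The greedy construction, adapted from Gopalan et al.'s proof of Lemma~\ref{lem.classical-singleton}, goes as follows. Start with $S_0$ an information set of $D$ ($|S_0|=k_X-\kappa$). Then $\rank_D(S_0)=\rank_X(S_0)=|S_0|$ and $f(S_0)=0$. While $f(S)\le\kappa-1-r$, pick $i$ whose column increases $\rank_X$ (equivalently, one with $f(S\cup\{i\})>f(S)$) and add $T_i$. Since $i$ is determined linearly by $T_i\setminus\{i\}$, adding $T_i$ raises $f$ by at most $|T_i\setminus S|-1\le r$. Thus each step gains at least one more coordinate than the increase in $f$. When the loop stops, finish by adding single coordinates, each raising $f$ by at most one, until $f(S)=\kappa-1$. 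The loop runs at least $\lceil\kappa/r\rceil-1$ times. The final set therefore has
$$|S|\ge(k_X-\kappa)+(\kappa-1)+\left\lceil\frac{\kappa}{r}\right\rceil-1=k_X+\left\lceil\frac{\kappa}{r}\right\rceil-2$$
and $f(S)\le\kappa-1$. This yields $\delta\le n-k_X-\lceil\kappa/r\rceil+2$.

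The main obstacle is this counting step. Carrying it out requires checking that a suitable $i$ always exists while $f(S)<\kappa$: if $f(S)<\kappa=f([n])$, some coordinate increases $f$. It also requires handling the last partial group so that the count $\lceil\kappa/r\rceil-1$ of "saved" coordinates is exact; I will mirror Gopalan's case split on whether the final group overshoots the threshold $\kappa-1$. Summing the $X$ and $Z$ bounds then gives \eqref{eq:singleton-like}.
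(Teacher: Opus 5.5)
Your proposal is correct and is essentially the paper's argument: your per-code inequality $\delta\le n-k_X-\lceil\kappa/r\rceil+2$ is exactly the paper's $\delta\le k_Z+c-\kappa-\lceil\kappa/r\rceil+2$ (since $k_Z+c-\kappa=n-k_X$), and starting the greedy from an information set $S_0$ of $\mathcal C_X\cap\mathcal C_Z^\perp$ amounts to passing to the subcode $\mathcal D_X=\{\vect x\in\mathcal C_X:\vect x|_{S_0}=\vect 0\}$, puncturing $S_0$, and rerunning Gopalan et al.'s proof on that code (indeed $f(S_0\cup T)=\dim(\mathcal D_X|_T)$). The paper instead invokes Lemma~\ref{lem.classical-singleton} as a black box on this punctured code, which lets you skip the greedy bookkeeping; your $\kappa=0$ case follows at once from $\delta\le d_X\le n-k_X+1$.
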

\begin{proof}
By Theorem \ref{thm.eacss}, there exist two linear codes $\mathcal C_X$ and
$\mathcal C_Z$ with respective parameters
$[n, k_X, d_X]_q$ and $[n, k_Z, d_Z]_q$ such that $c=n-k_X-\ell_Z=n-k_Z-\ell_X$ and $\kappa=k_X+k_Z-n+c$, 
where $\ell_X=\dim(\mathcal C_X\cap \mathcal C_Z^\perp)$ and
$\ell_Z=\dim(\mathcal C_X^\perp\cap \mathcal C_Z)$.  
For every $i\in[n]$, Theorem~\ref{thm.eacss} also provides
$\vect h_i^X\in\mathcal C_X^\perp$ and
$\vect h_i^Z\in\mathcal C_Z^\perp$.  For each
$\alpha\in\{X,Z\}$, the paired support condition gives
$$i\in\supp(\vect h_i^\alpha)~{\rm and}~|\supp(\vect h_i^\alpha)| \le |\supp(\vect h_i^X)\cup\supp(\vect h_i^Z)| \le r+1.$$
Hence, Lemma~\ref{lem.classical-locality-dual} shows that both
$\mathcal C_X$ and $\mathcal C_Z$ have locality at most $r$.

The relations defining $c$ also give $\kappa=k_X-\ell_X=k_Z-\ell_Z$.
We consider separately the cases $\kappa=0$ and $\kappa>0$.

{\bf Case 1.} Assume that $\kappa=0$.  Then $\ell_X=k_X$ and
$\ell_Z=k_Z$, implying that
$\mathcal C_X\subseteq\mathcal C_Z^\perp$ and $\mathcal C_Z\subseteq\mathcal C_X^\perp$. 
Hence, \eqref{eq.eaqecc-css} gives $\delta=\min\{d_X,d_Z\}$. 
Applying Lemma \ref{lem.classical-singleton} to $\mathcal C_X$ and $\mathcal C_Z$ yields
$$2\delta \le d_X+d_Z \le 2n-k_X-k_Z -\left\lceil\frac{k_X}{r}\right\rceil -\left\lceil\frac{k_Z}{r}\right\rceil+4 = n+c -\left\lceil\frac{k_X}{r}\right\rceil -\left\lceil\frac{k_Z}{r}\right\rceil+4 \le n+c+4.$$
This is \eqref{eq:singleton-like} when $\kappa=0$. 

{\bf Case 2.} Assume that $\kappa>0$.  We can choose a generator matrix of
$\mathcal C_X$ whose first $\ell_X$ rows generate
$\mathcal C_X\cap\mathcal C_Z^\perp$. 
Note that permutation of coordinates does not change the locality of a code. 
After a coordinate permutation and elementary row operations, this generator
matrix can be written as
$$
  \begin{pmatrix}
  \mat{I}_{\ell_X} & \mat{R}_{\ell_X\times(n-\ell_X)} \\
  {\bf 0} & \mat{A}_X
  \end{pmatrix},
$$
where $\mat A_X$ has row rank $\kappa$.  Let $\mathcal D_X$ be the
$[n,\kappa]_q$ linear code generated by $({\bf 0}\ \mat A_X)$.
Since $\mathcal D_X\subseteq\mathcal C_X$, we have $\mathcal D_X\cap\mathcal C_Z^\perp = \mathcal D_X\cap \bigl(\mathcal C_X\cap\mathcal C_Z^\perp\bigr) =\{{\bf 0}\}$.
Therefore, $\mathcal D_X\setminus\{{\bf 0}\} \subseteq \mathcal C_X\setminus\mathcal C_Z^\perp \subseteq \mathcal C_X\setminus (\mathcal C_X\cap\mathcal C_Z^\perp)$.
With \eqref{eq.eaqecc-css}, we derive that 
$$d(\mathcal D_X) = \min_{\vect{x}\in\mathcal D_X\setminus\{{\bf 0}\}} \wt(\vect{x}) \ge \wt(\mathcal C_X\setminus (\mathcal C_X\cap\mathcal C_Z^\perp)) \ge\delta.$$

Since the first $\ell_X$ coordinates of every codeword in $\mathcal D_X$
are zero, puncturing these coordinates gives an
$[n-\ell_X=k_Z+c,\kappa,d_X'\ge\delta]_q$ code
$\mathcal D_X'$.  Since $\mathcal D_X\subseteq\mathcal C_X$, one has
$\mathcal C_X^\perp\subseteq\mathcal D_X^\perp$.  For every
$i\in[n]\setminus[\ell_X]$, deleting the first $\ell_X$ entries from the
local check $\vect h_i^X$ therefore gives a dual word of $\mathcal D_X'$
that contains the corresponding coordinate and has weight at most $r+1$.
Thus, $\mathcal D_X'$ has locality at most $r$.  Interchanging $X$ and $Z$ gives
a $[k_X+c,\kappa,d_Z'\ge\delta]_q$ code $\mathcal D_Z'$ with locality at
most $r$.
Again, applying Lemma \ref{lem.classical-singleton} to $\mathcal D_X'$ and $\mathcal D_Z'$, 
and using the fact $k_X+k_Z=\kappa+n-c$, we have 
$$2\delta\leq d_X'+d_Z'\le k_X+k_Z+2c-2\kappa -2\left\lceil\frac{\kappa}{r}\right\rceil+4 = n+c-\kappa -2\left\lceil\frac{\kappa}{r}\right\rceil+4,$$
which proves \eqref{eq:singleton-like}. 

Combining the above two cases completes the proof. 
\end{proof}

The Singleton-like bound in \eqref{eq:singleton-like} applies to every
CSS-like EAQLRC, whether pure or impure, but is not always attainable.
The following theorem presents necessary and sufficient conditions for the attainability of the bound in \eqref{eq:singleton-like} 
for pure CSS-like EAQLRCs.

\begin{theorem}\label{thm.singleton-equality}
Let $\mathcal C_X$ and $\mathcal C_Z$ be $[n,k_X,d_X]_q$ and
$[n,k_Z,d_Z]_q$ codes, respectively, such that
$(\mathcal C_X,\mathcal C_Z)$ satisfies the paired support conditions in \eqref{eq.condition}.  
Let $\mathcal Q(\mathcal C_X,\mathcal C_Z)$ be the pure $\dsb{n,\kappa,\delta;c}_q$ CSS-like EAQLRC of locality $r$ constructed
from Theorem \ref{thm.eacss}.
Then the EAQLRC $\mathcal Q(\mathcal C_X,\mathcal C_Z)$ is optimal with respect to the bound in
\eqref{eq:singleton-like}
if and only if the following three conditions hold:
\begin{enumerate}
  \item [\rm 1)]$\mathcal C_X$ and $\mathcal C_Z$ have the same parameters. 
  
  \item [\rm 2)]Both $\mathcal C_X$ and $\mathcal C_Z$ are optimal cLRCs meeting
  the Singleton-like bound in \eqref{eq.singleton-classical}. 

  \item [\rm 3)] $\left\lceil\frac{k_X}{r}\right\rceil=\left\lceil\frac{k_Z}{r}\right\rceil=\left\lceil\frac{\kappa}{r}\right\rceil$.
\end{enumerate}
\end{theorem}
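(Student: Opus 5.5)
Since the code is pure, $\delta=\min\{d_X,d_Z\}$, and by Lemma~\ref{lem:CSS-construction} together with the proof of Theorem~\ref{thm.lrc-bounds} we have $n+c=k_X+k_Z+\ell_X+\ell_Z-\kappa\cdot 0$; more usefully, $c=n-k_X-\ell_Z=n-k_Z-\ell_X$ and $\kappa=k_X-\ell_X=k_Z-\ell_Z$, which give
$$n+c-\kappa=2n-k_X-k_Z .$$
Both $\mathcal C_X$ and $\mathcal C_Z$ have locality at most $r$, as shown at the start of the proof of Theorem~\ref{thm.lrc-bounds}. We therefore plan to obtain a chain of inequalities whose two ends are $2\delta$ and the right-hand side of \eqref{eq:singleton-like}, and then read off when every inequality in it is an equality.

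Using $\kappa\le k_X$ and $\kappa\le k_Z$ (the chain below also covers $\kappa=0$), we write
$$2\delta=2\min\{d_X,d_Z\}\le d_X+d_Z\le 2n-k_X-k_Z-\left\lceil\frac{k_X}{r}\right\rceil-\left\lceil\frac{k_Z}{r}\right\rceil+4\le n+c-\kappa-2\left\lceil\frac{\kappa}{r}\right\rceil+4.$$
The middle step is Lemma~\ref{lem.classical-singleton} applied to each code separately. The last step uses the identity above together with $\lceil k_\alpha/r\rceil\ge\lceil\kappa/r\rceil$. Equality in \eqref{eq:singleton-like} holds if and only if all three inequalities are equalities:
\begin{enumerate}
\item[(a)] the first is tight iff $d_X=d_Z$;
\item[(b)] the second is tight iff both codes meet \eqref{eq.singleton-classical}, which is condition 2);
\item[(c)] the third is tight iff $\lceil k_X/r\rceil=\lceil k_Z/r\rceil=\lceil\kappa/r\rceil$, which is condition 3).
\end{enumerate}

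Sufficiency is then immediate. Condition 1) gives (a), and conditions 2) and 3) give (b) and (c), so the whole chain collapses to equality.

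For necessity, equality forces (a), (b) and (c), so conditions 2) and 3) hold and $d_X=d_Z$. It remains to obtain $k_X=k_Z$, and we expect this to be the main obstacle. From (b), (c) and $d_X=d_Z$ we get
$$n-k_X-\left\lceil\frac{\kappa}{r}\right\rceil+2=n-k_Z-\left\lceil\frac{\kappa}{r}\right\rceil+2,$$
hence $k_X=k_Z$. With the same length, dimension and distance, the two codes have the same parameters, which is condition 1).

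The remaining risk is notational rather than mathematical. Locality $r$ in Lemma~\ref{lem.classical-singleton} should be read as "at most $r$", which is fine because the bound is monotone in $r$. We also need $r\ge1$ so that the ceiling terms make sense.
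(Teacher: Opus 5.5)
Your proposal is correct and is essentially the paper's proof. It uses the same chain $2\delta\le d_X+d_Z\le 2n-k_X-k_Z-\lceil k_X/r\rceil-\lceil k_Z/r\rceil+4\le n+c-\kappa-2\lceil\kappa/r\rceil+4$, the same analysis of when each step is tight, and the same deduction of $k_X=k_Z$ from $d_X=d_Z$ and the common ceiling value $\lceil\kappa/r\rceil$.

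The only blemish is the identity $n+c=k_X+k_Z+\ell_X+\ell_Z$ in your first sentence, which is false in general (it fails for the $\dsb{5,2,3;1}_5$ code of Example~\ref{exam.1}). Delete it; you never use it, and the relation you actually need, $n+c-\kappa=2n-k_X-k_Z$, follows correctly from the identities you state right after it.
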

\begin{proof}
Substituting the three conditions into \eqref{eq:singleton-like} yields
equality and proves sufficiency.  We next prove necessity.

Suppose that $\mathcal Q(\mathcal C_X,\mathcal C_Z)$ is optimal with respect to the bound in \eqref{eq:singleton-like}. 
Recall that we have shown in the proof of Theorem~\ref{thm.lrc-bounds} that 
both $\mathcal C_X$ and $\mathcal C_Z$ have locality at most $r$. 
Therefore, applying the classical Singleton-like LRC bound in \eqref{eq.singleton-classical} 
to $\mathcal C_X$ and $\mathcal C_Z$ gives 
$$d_X \le n-k_X-\left\lceil\frac{k_X}{r}\right\rceil+2~{\rm and}~ d_Z \le n-k_Z-\left\lceil\frac{k_Z}{r}\right\rceil+2.$$
Since $\delta=\min\{d_X,d_Z\}$ and $\kappa=k_X+k_Z-n+c\leq \min\{k_X,k_Z\}$, 
we obtain
\begin{equation}\label{eq:~pure-equality-chain}
\begin{aligned}
2\delta
&\le d_X+d_Z
\le 2n-k_X-k_Z-\left\lceil\frac{k_X}{r}\right\rceil
-\left\lceil\frac{k_Z}{r}\right\rceil+4\\
&\le 2n-k_X-k_Z-2\left\lceil\frac{\kappa}{r}\right\rceil+4
=n+c-\kappa-2\left\lceil\frac{\kappa}{r}\right\rceil+4.
\end{aligned}
\end{equation}

Since $\mathcal Q(\mathcal C_X,\mathcal C_Z)$ is optimal, the first and last terms in
\eqref{eq:~pure-equality-chain} are equal.  Hence, all the inequalities in this
chain must be equalities.
Note also that the following facts hold. 
\begin{itemize}
  \item [\rm (\romannumeral 1)] The first equality in \eqref{eq:~pure-equality-chain} holds if and only if $d_X=d_Z$. 

  \item [\rm (\romannumeral 2)] Combining the classical Singleton-like bound for cLRCs in \eqref{eq.singleton-classical} with the fact that $d_X=d_Z$, 
  the second equality in \eqref{eq:~pure-equality-chain} holds if and only if 
$d_X=n-k_X-\left\lceil\frac{k_X}{r}\right\rceil+2$ and $d_Z=n-k_Z-\left\lceil\frac{k_Z}{r}\right\rceil+2$.  
Equivalently, $\mathcal C_X$ and $\mathcal C_Z$ are optimal cLRCs meeting 
the classical Singleton-like bound in \eqref{eq.singleton-classical}. 
This gives condition {\rmfamily 2)}.

  \item [\rm (\romannumeral 3)] The third equality in \eqref{eq:~pure-equality-chain} holds if and only if $\left\lceil\frac{k_X}{r}\right\rceil + \left\lceil\frac{k_Z}{r}\right\rceil = 2\left\lceil\frac{\kappa}{r}\right\rceil$.
Since $\kappa\le \min\{k_X,k_Z\}$, we have $\left\lceil\frac{k_X}{r}\right\rceil = \left\lceil\frac{k_Z}{r}\right\rceil = \left\lceil\frac{\kappa}{r}\right\rceil$.
This gives condition {\rmfamily 3)}. 

\item [\rm (\romannumeral 4)] By {\rm (\romannumeral 1)} and
{\rm (\romannumeral 2)}, we deduce that $k_X+\left\lceil\frac{k_X}{r}\right\rceil = k_Z+\left\lceil\frac{k_Z}{r}\right\rceil.$ 
The equality of the ceiling terms in {\rm (\romannumeral 3)} therefore yields 
$k_X=k_Z$.  Together with $d_X=d_Z$, this proves that
$\mathcal C_X$ and $\mathcal C_Z$ have the same parameters, giving
condition {\rmfamily 1)}.
\end{itemize}
This completes the proof.
\end{proof}

\section{Two explicit families of optimal pure CSS-like EAQLRCs}\label{sec:constructions}

In this section, we develop a framework for constructing optimal pure  CSS-like EAQLRCs 
with respect to the Singleton-like bound in \eqref{eq:singleton-like} 
and apply this framework to obtain two explicit families of optimal pure CSS-like EAQLRCs. 
We also provide two concrete examples to illustrate our constructions.

\subsection{A framework}\label{subsec:framework}

By combining Theorem~\ref{thm.eacss} and Lemmas \ref{lem.classical-locality-dual} and \ref{lem:CSS-construction}, 
the problem of constructing optimal CSS-like EAQLRCs with respect to the Singleton-like bound 
can be reduced to the problem of designing classical parity-check matrices of $\C_X$ and $\C_Z$.   
We have a framework with the following three steps for constructing an optimal pure  $\dsb{n,\kappa,d;c}_q$ EAQLRC with locality $r$ 
with respect to the Singleton-like bound in \eqref{eq:singleton-like}: 
\begin{itemize}
  \item {\bf Step 1:} Choose two parity-check matrices $\mat H_X$ and
$\mat H_Z$ such that the corresponding codes $\mathcal C_X$ and
$\mathcal C_Z$ are optimal $[n,k,d]_q$ cLRCs with locality $r$ and
satisfy the paired support condition in \eqref{eq.condition}.  

  This step guarantees that the resulting CSS-like EAQLRC $\mathcal{Q}(\C_X,\C_Z)$ has locality $r$ and the conditions 1) and 2) 
  in Theorem~\ref{thm.singleton-equality} hold. 
  In particular, there are many candidates for such parity-check matrices in the literature where optimal cLRCs were constructed. 

  \item {\bf Step 2:} Choose a nonsingular $n\times n$ diagonal matrix $\mat D$ over $\F_q$ such that 
  $$\left\lceil\frac{k}{r}\right\rceil=\left\lceil\frac{2k-n+\rank(\mat H_X\mat D\mat H_Z^{\top})}{r}\right\rceil,$$  
  where $\kappa=2k-n+c$ with $c=\rank(\mat H_X\mat D\mat H_Z^{\top})$ 
  by Lemmas \ref{lem:CSS-construction} and \ref{lem.defect}. 

  It is clear that each row of $H_Z$ has the same support as the corresponding row of $H_ZD$. 
  Let $\C_Z'$ be the code with parity-check matrix $\mat H_Z\mat D$.
  Then $\C_Z$ and $\C_Z'$ have the same parameters and locality. 
  Therefore, $\C_Z'$ is an optimal $[n,k,d]_q$ cLRC with locality $r$ if and only if 
  $\C_Z$ is an optimal $[n,k,d]_q$ cLRC with locality $r$. 
  As a result, the CSS-like EAQLRC $\mathcal{Q}(\C_X,\C_Z')$ has locality $r$ and all the conditions 1)-3) 
  in Theorem~\ref{thm.singleton-equality} hold. 
  In addition, the introduction of $\mat{D}$ may improve the possibility for making condition 3) hold 
  compared to directly using $\mat H_X$ and $\mat H_Z$ as parity-check matrices of $\C_X$ and $\C_Z$ constructed in the literature.

\item {\bf Step 3:} Show that $\mathcal{Q}(\C_X,\C_Z')$ is pure. 

From Theorem~\ref{thm.singleton-equality}, we finally obtain a 
pure 
$$\dsb{n,2k-n+\rank(\mat H_X\mat D\mat H_Z^{\top}),d;\rank(\mat H_X\mat D\mat H_Z^{\top})}_q$$ 
EAQLRC with locality $r$, which is optimal with respect to the Singleton-like bound in \eqref{eq:singleton-like}. 
\end{itemize}

\begin{remark} We have the following remarks about the above framework. 
  \begin{enumerate}
    \item  [\rm 1)]   The introduction of the nonsingular diagonal matrix $\mat D$ in the above framework 
    has at least the following three effects. 
    \begin{itemize}
        \item Compared with directly using parity-check matrices $\mat{H}_X$ and $\mat{H}_Z$ 
        from existing constructions of optimal cLRCs $\mathcal C_X$ and $\mathcal C_Z$, 
        the freedom to choose $\mat{D}$ can make condition~{\rm 3)} easier to satisfy. 

      \item Since $\mat{D}$ is nonsingular and diagonal, multiplication by $\mat{D}$ rescales 
      the coordinates without changing their zero positions. 
      Consequently, the paired support condition in \eqref{eq.condition} 
      is preserved for the new pair $(\mathcal C_X,\mathcal C_Z')$.

      \item Different choices of $\mat{D}$ may change the numbers of the maximally entangled pairs and dimensions of 
      the resulting EAQLRCs while keeping their code lengths, minimum distances, and localities. 
      This is important for constructing EAQLRCs with flexible parameters. 
    \end{itemize}

    \item  [\rm 2)] Anderson $et~al.$ \cite{Anderson2022} proved that, for $q>2$, 
monomial equivalence can reduce the relative hull dimension. Their result guarantees 
a monomial transformation, which may contain a coordinate permutation, whereas the 
existence of a suitable diagonal transformation is not guaranteed, as emphasized in 
\cite[Remark~3.10]{Anderson2022}. Although monomial equivalence preserves the individual 
locality of each classical code, a permutation applied only to $\mathcal C_Z$ may misalign 
its local dual checks with those of the fixed code $\mathcal C_X$ and hence need not preserve 
the paired support condition in \eqref{eq.condition}. Therefore, the results in 
\cite{Anderson2022} cannot be directly applied to our framework. 
  \end{enumerate}
\end{remark}

In the following subsections, we apply the above framework to construct optimal EAQLRCs. 
Specifically, we choose $\mat H_X$ and $\mat H_Z$ to be parity-check matrices of previously well-constructed optimal cLRCs
and find appropriate nonsingular diagonal matrices $\mat D$ described in the framework.

\subsection{Optimal EAQLRCs from \texorpdfstring{$\ell$}{ell}-intersection pairs of MDS codes}

Recall that Remark \ref{rem.trivial-MDS} shows that a single MDS code can 
be used to construct a pure CSS-like EAQLRC, while its locality achieves the maximum possible. 
For quantum local recovery, such CSS-like EAQLRCs are regarded as trivial. 
In this subsection, we set $\mat H_X$ and $\mat H_Z$ to be parity-check matrices of two different MDS codes, 
and take $\mat{D}=\mat I_n$ in the framework of Subsection~\ref{subsec:framework}. 
Note that each $[n,k,n-k+1]_q$ MDS code is an optimal cLRC with locality $r=k$. 
As a result, we can obtain a family of optimal pure CSS-like EAQLRCs 
with flexible parameters and nontrivial locality. 
To this end, we need the following lemma.

\begin{lemma}\label{lem.grs-intersection-pair}
{\rm (\!\! \cite[Proposition~1, Remark~1, and Theorems~2 and 3]{HuangFangFu2022Intersection})}
Let $q\ge3$ be a prime power, let $n,k_1,k_2>0$ and $\ell\geq 0$ be 
integers satisfying $n\le q+1$, $k_1,k_2\le n-1$, $\max\{k_1+k_2-n,0\}\le\ell\le\min\{k_1,k_2\}$,
$(n,k_1,k_2)\notin\{(q+1,\ell,\ell+1),(q+1,\ell+1,\ell)\}$,
and $(n,\ell,\min\{k_1,k_2\})\ne(q+1,0,1)$.
Then there exist an $[n,k_1,n-k_1+1]_q$ MDS code $\mathcal C_1$ and an $[n,k_2,n-k_2+1]_q$ MDS code $\mathcal C_2$ such that
$\dim(\mathcal C_1\cap\mathcal C_2)=\ell$.
\end{lemma}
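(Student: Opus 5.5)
This lemma is quoted from \cite{HuangFangFu2022Intersection}. We would prove it by realizing both codes as generalized Reed--Solomon (GRS) codes on a common evaluation set and adjusting their column multipliers. We then count the intersection through the rank of the concatenated parity-check matrix.

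The basic identity is
$$\dim(\mathcal C_1\cap\mathcal C_2)=n-\rank\begin{pmatrix}\mat H_1\\ \mat H_2\end{pmatrix},$$
where $\mat H_1,\mat H_2$ are parity-check matrices of $\mathcal C_1,\mathcal C_2$. Equivalently, $\dim(\mathcal C_1\cap\mathcal C_2)=k_1+k_2-\dim(\mathcal C_1+\mathcal C_2)$, so it suffices to control $\dim(\mathcal C_1+\mathcal C_2)$.

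Concretely, fix distinct evaluation points $a_1,\ldots,a_n\in\F_q\cup\{\infty\}$ (this is where $n\le q+1$ enters). Take $\mathcal C_1=GRS_{k_1}(\vect a,\vect v)$ and $\mathcal C_2=GRS_{k_2}(\vect a,\vect w)$, and assume without loss of generality that $k_1\le k_2$.

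First suppose $\vect w=\lambda\vect v$. Then the codes are nested, $\mathcal C_1\subseteq\mathcal C_2$, and $\ell=k_1=\min\{k_1,k_2\}$.

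For smaller $\ell$, choose $\vect w=\vect v\cdot(g(a_1),\ldots,g(a_n))$ for a polynomial $g$ with no roots among the $a_j$. Then
$$\mathcal C_1\cap\mathcal C_2=\{(f(a_j)v_j)_j:\ \deg f<k_1,\ f=gh,\ \deg h<k_2\},$$
and this must be counted modulo the vanishing polynomial $\prod_j(x-a_j)$. A more flexible alternative is to take $\vect w_j=v_j\,u(a_j)$ for a rational function $u$. Both choices reduce the count to the dimension of a Riemann--Roch-type space, namely polynomials $f$ of degree $<k_1$ such that $f/u$ agrees on the points with a polynomial of degree $<k_2$.

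When $k_1+k_2\le n$, choosing $u$ of suitable degree $m$ gives an intersection of dimension $\max\{0,\min\{k_1,k_2-m\}\}$ roughly. Varying $m$ then realizes every $\ell$ from $0$ to $\min\{k_1,k_2\}$. When $k_1+k_2>n$, we pass to duals: $\dim(\mathcal C_1^\perp\cap\mathcal C_2^\perp)=\ell-(k_1+k_2-n)$, and the dual of a GRS code is again GRS. So the case $k_1+k_2>n$ reduces to the first case with parameters $(n-k_1,n-k_2)$, which explains the lower bound $\ell\ge k_1+k_2-n$.

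The main obstacle is the boundary length $n=q+1$. Here every element of $\mathbb P^1(\F_q)$ is used, so a polynomial $g$ with no roots among the points must be constant. The multipliers can then only come from rational functions whose zeros and poles avoid nothing, and certain degree shifts become unavailable. These are exactly the excluded triples $(q+1,\ell,\ell+1)$, $(q+1,\ell+1,\ell)$ and $(q+1,0,1)$.

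For $n=q+1$ we would handle the realizable cases with twisted or explicit constructions. The non-realizable cases, for instance two $[q+1,1]$ codes with trivial intersection versus containment, we would simply exclude. We would first check the small cases directly to confirm that the exceptional list is exactly right, using $q\ge3$ to guarantee enough nonzero multipliers.
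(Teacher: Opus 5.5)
The paper does not prove this lemma; it quotes it from \cite{HuangFangFu2022Intersection}, so your sketch has to stand on its own. For $n\le q-1$ it essentially does: take $g=(x-\alpha)^m$ with $\alpha\notin\{a_j\}$, check $k_2+m\le n$ so that no reduction modulo $\prod_j(x-a_j)$ occurs, and dualize when $k_1+k_2>n$. The genuine gap is at the boundary, above all $n=q+1$, which is where the lemma has its content and its exceptions. There you only promise ``twisted or explicit constructions,'' and the claims you make in their place are false.

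Your base case already fails at $n=q+1$. With $\infty$ among the evaluation points, the usual $\infty$-coordinate of ${\rm GRS}_k(\vect a,\vect v)$ is $v_\infty f_{k-1}$. So for $k_1<k_2$, the choice $\vect w=\lambda\vect v$ gives $\C_1\cap\C_2=\{\vect x\in\C_1:~x_\infty=0\}$, of dimension $k_1-1$, not $\C_1\subseteq\C_2$.

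The excluded triples are also not ``exactly'' the obstructions of your method. The excluded case $(n,\ell,\min\{k_1,k_2\})=(q+1,0,1)$ is always realizable. For $q\ge3$, full-weight vectors span $\F_q^n$, so one of them avoids any $[q+1,k_2]_q$ code with $k_2\le q$. In particular, two $[q+1,1]_q$ codes with trivial intersection exist, contrary to what you say. Conversely, non-excluded cases such as $(n,k_1,k_2,\ell)=(q+1,2,2,1)$ are not produced by your count.

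To see this, write $\C_i=C_L(D,G_i)$ on $\mathbb P^1$; every GRS pair on a common point set has this form. Your count identifies $\C_1\cap\C_2$ with the evaluation of $L(G_1\wedge G_2)$, which is justified only when $\deg(G_1\vee G_2)<n$, i.e.\ when there is no wrap-around. In that regime, for $\ell\ge1$, the divisors $G_i-(G_1\wedge G_2)$ are effective, have disjoint supports off $D$, and have degree $k_i-\ell$. Hence $k_i-\ell=1$ needs a rational place outside $D$, and $k_1-\ell=k_2-\ell=1$ needs two distinct ones. This obstruction already bites at $n=q$, which you did not flag. Only one point of $\mathbb P^1(\F_q)$ lies off the evaluation set; equivalently, no degree-one polynomial is root-free on $\F_q$. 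So, e.g., $(n,k_1,k_2,\ell)=(q,2,2,1)$ with $q\ge4$, which the lemma asserts, is not reached, and dualizing gives the same pattern.

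One uniform way to close these gaps is to rescale coordinates instead of using low-degree multipliers. Let $\C_1'\subseteq\C_2$ be MDS codes of dimensions $k_1\le k_2$. Let $\mat D$ multiply the coordinates in a set $T$, with $|T|=t\le\min\{k_1,n-k_2\}$, by a fixed $\gamma\notin\{0,1\}$; this is where $q\ge3$ is used. For $\vect x\in\C_1'\mat D\cap\C_2$, the vector $\vect x-\vect x\mat D^{-1}$ lies in $\C_2$ and is supported on $T$, with $|T|<d(\C_2)$, so it vanishes. Hence $\C_1'\mat D\cap\C_2=\{\vect x\in\C_1':~x_j=0\text{ for }j\in T\}$, of dimension $k_1-t$, which runs through all of $[\max\{0,k_1+k_2-n\},k_1]$. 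Nested GRS pairs supply $\C_1'\subseteq\C_2$ for $n\le q$.

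For $n=q+1$, apply the same rescaling, on the $q$ finite coordinates with $t\le\min\{k_1-1,q-k_2\}$, to the doubly extended pair above. Its finite parts are nested, and its intersection is the subcode of the first code vanishing at $\infty$. This gives every $\ell\in[\max\{0,k_1+k_2-n\},k_1-1]$. The value $\ell=k_1$ comes from $\C_1=\C_2$ if $k_1=k_2$. Otherwise it comes from $C_L(D,G)\subseteq C_L(D,G+P)$, with $G$ and $P$ supported away from the $q+1$ rational places, $\deg G=k_1-1$, and $\deg P=k_2-k_1\ge2$. Only $\ell=k_1$ with $k_2=k_1+1$ is left, and that is precisely the excluded case $(q+1,\ell,\ell+1)$.
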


\begin{theorem}\label{thm.mds-eaqlrc-spectrum}
Let $q\ge3$ be a prime power, and let $d,n$ be positive integers satisfying
$2\le d\le n\le q+1$. For every integer $c$ satisfying
$\max\{0,2d-n-1\}\le c\le d-1$ and
$(n,d,c)\notin\{(q+1,2,1),(q+1,q+1,q)\}$,
with the additional restriction
$(n,d,c)\ne(q+1,q/2+1,0)$ when $q$ is even,
there exists a pure
$\dsb{n,n+c-2d+2,d;c}_q$ EAQLRC with locality $r=n-d+1$ that is optimal
with respect to the Singleton-like bound in \eqref{eq:singleton-like}.
\end{theorem}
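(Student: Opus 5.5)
We will realize the code as an EA-CSS code $\mathcal Q(\mathcal C_X,\mathcal C_Z)$ from Theorem~\ref{thm.eacss}. Here $\mathcal C_X=\mathcal C_1^\perp$ and $\mathcal C_Z=\mathcal C_2$, where $\mathcal C_1,\mathcal C_2$ are supplied by Lemma~\ref{lem.grs-intersection-pair} with
$$k_1=d-1,\qquad k_2=n-d+1,\qquad \ell=d-1-c .$$
Set $k=n-d+1$. Since the dual of an MDS code is MDS, both $\mathcal C_X$ and $\mathcal C_Z$ are $[n,k,d]_q$ MDS codes.

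\textbf{Checking the hypotheses of Lemma~\ref{lem.grs-intersection-pair}.}
\begin{itemize}
\item From $2\le d\le n$ we get $0<k_1,k_2\le n-1$.
\item $k_1+k_2-n=0$, so the lower constraint is $\ell\ge0$, which is $c\le d-1$.
\item The upper constraint $\ell\le\min\{d-1,n-d+1\}$ amounts to $c\ge0$ and $c\ge 2d-n-2$. Both follow from $c\ge\max\{0,2d-n-1\}$.
\end{itemize}
Next we translate the excluded triples.
\begin{itemize}
\item $(n,k_1,k_2)=(q+1,\ell,\ell+1)$ forces $c=0$ and $q+2=2d$. For even $q$ this is the excluded $(q+1,q/2+1,0)$.
\item $(n,k_1,k_2)=(q+1,\ell+1,\ell)$ forces $c=1$ and $2d=q+4$. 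Then $2d-n-1=2>c$, so this case is already outside the range of $c$.
\item $(n,\ell,\min\{k_1,k_2\})=(q+1,0,1)$ forces $c=d-1$ with $d=2$ or $d=q+1$. These are exactly the excluded $(q+1,2,1)$ and $(q+1,q+1,q)$.
\end{itemize}

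\textbf{Parameters.} Lemma~\ref{lem:CSS-construction} gives
$$c'=n-k-\dim(\mathcal C_X^\perp\cap\mathcal C_Z)=d-1-\dim(\mathcal C_1\cap\mathcal C_2)=c,$$
and
$$\kappa=2k-n+c=n+c-2d+2\ge1,$$
where the last inequality uses $c\ge 2d-n-1$.

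\textbf{Locality $r=k$ via Theorem~\ref{thm.eacss}.} Fix $i\in[n]$ and any $T\subseteq[n]$ with $|T|=k+1$ and $i\in T$. Both $\mathcal C_X^\perp$ and $\mathcal C_Z^\perp$ are $[n,d-1,k+1]_q$ MDS codes of dimension $n-(k+1)$. Hence each contains a nonzero codeword vanishing on the $n-k-1$ coordinates outside $T$. Such a codeword has weight exactly $k+1$, so its support is $T$ and in particular contains $i$. Taking these codewords as $\vect h_i^X$ and $\vect h_i^Z$ gives a pair satisfying \eqref{eq.condition} with $|\supp(\vect h_i^X)\cup\supp(\vect h_i^Z)|=k+1$.

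\textbf{Purity and optimality.} From \eqref{eq.eaqecc-css}, $\delta$ is either $\min\{d_X,d_Z\}=d$ or a minimum weight taken over subsets of $\mathcal C_X$ and $\mathcal C_Z$. In either case $\delta\ge d$. On the other hand, Theorem~\ref{thm.lrc-bounds} with $r=k$ and $1\le\kappa\le k$ gives $\lceil\kappa/k\rceil=1$, so
$$2\delta\le n+c-\kappa-2+4=2d .$$
Thus $\delta=d=\min\{d_X,d_Z\}$. Hence the code is pure and attains \eqref{eq:singleton-like} with equality.

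The main work is the careful matching of the exceptional triples of Lemma~\ref{lem.grs-intersection-pair} to the stated exclusions; the remaining steps are direct verifications.
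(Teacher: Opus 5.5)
Your proposal is correct and follows the paper's proof essentially step by step: the same choice $k_1=d-1$, $k_2=n-d+1$, $\ell=d-1-c$ with $\mathcal C_X=\mathcal C_1^\perp$ and $\mathcal C_Z=\mathcal C_2$, the same matching of the exceptional triples of Lemma~\ref{lem.grs-intersection-pair}, the same MDS-support argument for the paired checks, and the same purity argument that squeezes $\delta$ between $d$ and the Singleton-like bound using $\lceil\kappa/r\rceil=1$. One small slip: $\mathcal C_X^\perp$ and $\mathcal C_Z^\perp$ have dimension $n-k=d-1$, not $n-(k+1)$, and it is precisely this dimension exceeding the $n-k-1$ vanishing constraints by one that guarantees a nonzero codeword supported on $T$.
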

\begin{proof}
\noindent{\bf Step 1.}
Set $k_1=d-1$, $k_2=n-d+1$, and $\ell=d-1-c$. Then
$k_1+k_2=n$, and the assumed range of $c$ gives
\begin{align}\label{eq.conA.1}
0\le\ell\le\min\{d-1,n-d\}\le\min\{k_1,k_2\}.
\end{align}
Moreover, $1\le k_1,k_2\le n-1$. 
Under the above substitution, the condition
$(n,d,c)\notin\{(q+1,2,1),(q+1,q+1,q)\}$ excludes the case
$(n,\ell,\min\{k_1,k_2\})=(q+1,0,1)$. 
By \eqref{eq.conA.1}, $\ell\le k_2-1$, implying that 
$(k_1,k_2)=(\ell+1,\ell)$ is impossible.
The additional condition $(n,d,c)\ne(q+1,q/2+1,0)$, 
imposed when $q$ is even, also excludes the remaining case $(n,k_1,k_2)=(q+1,\ell,\ell+1)$. 
Therefore, it follows from Lemma~\ref{lem.grs-intersection-pair} that there exist 
two MDS codes $\mathcal C_1$ and $\mathcal C_2$ with respective parameters
$[n,d-1,n-d+2]_q$ and $[n,n-d+1,d]_q$, respectively, such that $\dim(\mathcal C_1\cap\mathcal C_2)=\ell$. 
Let $\mat H_X$ be a generator matrix of $\mathcal C_1$, and let
$\mat H_Z$ be a generator matrix of $\mathcal C_2^\perp$.  Equivalently,
$\mat H_X$ and $\mat H_Z$ are respective parity-check matrices of
$$
\mathcal C_X=\mathcal C_1^\perp
~{\rm and}~
\mathcal C_Z=\mathcal C_2,
$$
where both $\mathcal C_X$ and $\mathcal C_Z$ are optimal $[n,k=n-d+1,d]_q$ cLRCs with locality $r=n-d+1$.

Since the dual of an MDS code is also MDS, both
$\mathcal C_X^\perp=\mathcal C_1$ and
$\mathcal C_Z^\perp=\mathcal C_2^\perp$ are
$[n,d-1,n-d+2]_q$ MDS codes.  For each $i\in[n]$, choose
$S_i\subseteq[n]$ such that $i\in S_i$ and
$|S_i|=n-d+2=r+1$.  Since every set of $n-d+2$ coordinates is the
support of a minimum weight codeword in an
$[n,d-1,n-d+2]_q$ MDS code, there exist
$\vect h_i^X\in\mathcal C_X^\perp$ and
$\vect h_i^Z\in\mathcal C_Z^\perp$ such that
$$
i\in \supp(\vect h_i^X)=\supp(\vect h_i^Z)=S_i.
$$
Hence, the paired support condition in \eqref{eq.condition} holds.

\noindent{\bf Step 2.}
Take $\mat D=\mat I_n$. Let $\C_Z'$ be the code with parity-check matrix $\mat H_Z'=\mat H_Z\mat D=\mat H_Z$. 
Then $\C_Z'=\C_Z$.
By Lemma \ref{lem.defect}, we have 
\begin{equation}\label{eq.conA.2}
\begin{split}
\rank(\mat H_X\mat D\mat H_Z^{\top})
= & \rank(\mat H_X\mat H_Z^{\top}) \\
= & n-k_X-\dim(\C_X^\perp\cap\C_Z)
=k_1-\dim(\mathcal C_1\cap\mathcal C_2)
=d-1-\ell
=c.
\end{split}
\end{equation}
Combining \eqref{eq.conA.1} and \eqref{eq.conA.2}, we verify that 
$$
\left\lceil
\frac{2k-n+\rank(\mat H_X\mat D\mat H_Z^{\top})}{r}
\right\rceil
=\left\lceil\frac{n-d-\ell+1}{n-d+1}\right\rceil
=1
=\left\lceil\frac{k}{r}\right\rceil
$$
holds. 

\noindent{\bf Step 3.}
It remains to prove purity. 
On the one hand, it follows from Theorem \ref{thm.eacss} that 
$\mathcal{Q}(\mathcal C_X,\mathcal C_Z')$ has parameters 
$\dsb{n,n+c-2d+2,\delta\geq d;c}_q$ and locality $r=n-d+1$. 
On the other hand, the Singleton-like bound in Theorem~\ref{thm.lrc-bounds} yields 
$$
2\delta
\le n+c-\kappa
-2\left\lceil\frac{\kappa}{r}\right\rceil+4
=2d.
$$ 
Hence, $\delta=d$ and the CSS-like EAQLRC $\mathcal{Q}(\mathcal C_X,\mathcal C_Z')$ is pure.

By the framework of Subsection~\ref{subsec:framework}, we conclude that  
$\mathcal{Q}(\mathcal C_X,\mathcal C_Z')$ is a pure 
$\dsb{n,n+c-2d+2,d;c}_q$ EAQLRC with locality $r=n-d+1$ that is optimal
with respect to the Singleton-like bound in \eqref{eq:singleton-like}. 
This completes the proof. 
\end{proof}

The following remark and example indicate that the optimal EAQLRCs constructed in Theorem~\ref{thm.mds-eaqlrc-spectrum} can have nontrivial locality.

\begin{remark}\label{rem.con1}
From the proof of Theorem~\ref{thm.mds-eaqlrc-spectrum}, the two underlying classical codes $\C_X$ and $\C_Z'$ 
have the same dimension $k=n-d+1$ and locality $r=k$. 
By Theorem~\ref{thm.elementary-locality}, the corresponding upper bound on the locality of $\mathcal{Q}(\mathcal C_X,\mathcal C_Z')$ is
$\min\{n-1,2k-\dim(\mathcal C_X\cap\mathcal C_Z')\}$. 
Since $\dim(\mathcal C_X\cap\mathcal C_Z')\le k$ and $k\le n-1$, both terms in this minimum are at least $k$.  
Therefore, the constructed locality $r=k$ attains this upper bound if and only if
$n-1=k$ or $\dim(\mathcal C_X\cap\mathcal C_Z')=k$, 
which is further equivalent to $d=2$ or $\mathcal C_X=\mathcal C_Z'$. 
However, $d>2$ and $\mathcal C_X\neq \mathcal C_Z'$ are possible in the construction of Theorem~\ref{thm.mds-eaqlrc-spectrum}. 
This fact implies that optimal EAQLRCs obtained in Theorem~\ref{thm.mds-eaqlrc-spectrum} can also have nontrivial locality 
even if we employ MDS codes as the underlying classical codes. 
\end{remark}

\begin{example}\label{exam.1}
Let $q=5$ and define
$$
\mat H_X=
\begin{pmatrix}
1&1&1&1&1\\
0&1&2&3&4
\end{pmatrix}
~{\rm and}~
\mat H_Z=
\begin{pmatrix}
1&1&2&3&4\\
0&1&4&4&1
\end{pmatrix}.
$$
In the framework of Subsection~\ref{subsec:framework}, set $\mat D=\mat I_5$.  Then
$$
\mat H_Z'=\mat H_Z\mat D=
\begin{pmatrix}
1&1&2&3&4\\
0&1&4&4&1
\end{pmatrix}.
$$
Set $\mathcal C_1=\row(\mat H_X)$ and $\mathcal C_2=\ker(\mat H_Z)$. 
Then $\mathcal C_1$ and $\mathcal C_2$ are $[5,2,4]_5$ and $[5,3,3]_5$ MDS codes, respectively. 
A direct calculation gives
$\dim(\mathcal C_1\cap\mathcal C_2)=1$. 
Thus, $(\mathcal C_1,\mathcal C_2)$ is a $1$-intersection pair corresponding to
$(q,n,d,c)=(5,5,3,1)$ in Theorem~\ref{thm.mds-eaqlrc-spectrum}. 
With the proof of Theorem~\ref{thm.mds-eaqlrc-spectrum}, 
we take $\mathcal C_X=\mathcal C_1^\perp$ and
$\mathcal C_Z'=\mathcal C_2$ to be both $[5,3,3]_5$ MDS codes.  
By Theorem~\ref{thm.mds-eaqlrc-spectrum}, we immediately obtain 
an optimal pure $\dsb{5,2,3;1}_5$ EAQLRC $\mathcal{Q}(\mathcal C_X,\mathcal C_Z')$ with locality $r=n-d+1=3$.

We also verify that $\dim(\mathcal C_X\cap\mathcal C_Z')=1$. 
Consequently, the upper bound in Theorem~\ref{thm.elementary-locality} gives 
$$\min\{n-1,k_X+k_Z-\dim(\mathcal C_X\cap\mathcal C_Z')\}
=\min\{4,3+3-1\}=4.$$  
Since $r=3<4$, the optimal EAQLRC $\mathcal{Q}(\mathcal C_X,\mathcal C_Z')$ has a nontrivial locality.
\end{example}

\subsection{Optimal EAQLRCs from block parity-check matrices}

In this subsection, we construct optimal EAQLRCs using the single matrix specialization
of the framework in Subsection~\ref{subsec:framework}, namely setting $\mat H_X=\mat H_Z=\mat H$. 
In this case, the nonsingular diagonal matrices $\mat D$ are generally not the identity matrix $\mat I_n$.  
Note also that there is a systematic construction of optimal cLRCs of length 
$n$ and locality $r$ such that $r+1\mid n$ by using block parity-check matrices with components that are 
parity-check matrices of generalized Reed-Solomon (GRS) codes, as summarized in \cite[Lemma~10]{LuoChenEzermanLing2025}. 

We first recall the definition of GRS codes.  
Let $\boldsymbol{a}=(a_1,a_2,\ldots,a_r)$ consist of pairwise distinct
elements of $\F_q$, and let $\vect v=(v_1,v_2,\ldots,v_r)\in(\F_q^*)^r$.  For
$1\le k\le r\leq q$, the {\em generalized Reed-Solomon (GRS) code} 
$$
{\rm GRS}_k(\boldsymbol{a},\vect v)
=\{(v_1f(a_1),v_2f(a_2),\ldots,v_rf(a_r)):~f\in\F_q[x]~{\rm with}~\deg(f)<k\}
$$ 
is an $[r,k,r-k+1]_q$ MDS code with generator matrix
$$
\mat G_{k,{\bf a},{\bf v}}=\begin{pmatrix}
v_1&v_2&\cdots&v_r\\
v_1a_1&v_2a_2&\cdots&v_ra_r\\
\vdots&\vdots&\ddots&\vdots\\
v_1a_1^{k-1}&v_2a_2^{k-1}&\cdots&v_ra_r^{k-1}
\end{pmatrix}.
$$

Let $q$ be a prime power, and let $d,u,r$ be positive integers satisfying
$3\le d\le r+1\le q$. For each $i\in[u]$, choose
$\vect a_i=(a_{i,1},a_{i,2},\ldots,a_{i,r+1})\in\F_q^{r+1}$ with pairwise distinct
coordinates and $\vect v_i\in(\F_q^*)^{r+1}$, and write
\begin{equation}\label{eq:~grs-block-check111}
  \mat G_{d-1,\vect a_i,\vect v_i}
  =
  \begin{pmatrix}
    \vect v_i\\
    \mat G_i
  \end{pmatrix}
  ~{\rm and}~
  A_i=\{a_{i,1},a_{i,2},\ldots,a_{i,r+1}\}.
\end{equation}
Let 
$\mat V={\rm diag}(\vect v_1,\vect v_2,\ldots,\vect v_u)~{\rm and}~\mat G=(\mat G_1\ \cdots\ \mat G_u)$. 
Consider the block matrix 
\begin{equation}\label{eq:~grs-block-check222}
  \mat H=
  \begin{pmatrix}
    \mat V\\
    \mat G
  \end{pmatrix}
  =
  \begin{pmatrix}
    \vect v_1       & \vect 0_{r+1} & \cdots & \vect 0_{r+1}\\
    \vect 0_{r+1}   & \vect v_2     & \cdots & \vect 0_{r+1}\\
    \vdots           & \vdots         & \ddots & \vdots\\
    \vect 0_{r+1}   & \vect 0_{r+1} & \cdots & \vect v_u\\
    \mat G_1         & \mat G_2       & \cdots & \mat G_u
  \end{pmatrix}.
\end{equation}

\begin{lemma}{\rm (\!\!\cite[Lemma~10]{LuoChenEzermanLing2025})}\label{lem.grs-block-classical-lrc}
Let $q$ be a prime power and let $u,d,r$ be positive integers such that $1\leq d-2<r\le q-1$.  
Let $\C$ be a linear code with parity-check matrix $\mat H$ in \eqref{eq:~grs-block-check222}. 
Then $\mathcal C$ is an optimal $[u(r+1),ur-d+2,d]_q$ linear code with locality $r$ if one of the following
conditions holds.
\begin{enumerate}
  \item [\rm 1)] $d\in\{3,4\}$, $\vect a_1=\cdots=\vect a_u$, and
  $\vect v_1=\cdots=\vect v_u$.
\item [\rm 2)] $d\ge5$ and $\left|\bigcup_{i\in S}A_i\right|\ge r|S|+1$, 
with $A_i$ as defined in \eqref{eq:~grs-block-check111}, 
for every $S\subseteq[u]$ with $1\le |S|\le\lfloor(d-1)/2\rfloor$.
\end{enumerate}
\end{lemma}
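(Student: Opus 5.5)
The plan is to verify three things in turn: the dimension, the locality, and the minimum distance $d(\mathcal C)\ge d$. Optimality then comes for free from Lemma~\ref{lem.classical-singleton}.

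\emph{Dimension.} $\mat H$ has $u+d-2$ rows. First I show these rows are linearly independent. The block rows $\vect v_1,\ldots,\vect v_u$ have disjoint supports. Restricted to block $1$, the full matrix $\mat H$ contains $\mat G_{d-1,\vect a_1,\vect v_1}$ (the row $\vect v_1$ together with $\mat G_1$). This is a Vandermonde-type matrix of rank $d-1\le r+1$. From this, any dependency would have to vanish on the $\mat G$-part, and then on each $\vect v_i$. Hence $k=u(r+1)-u-d+2=ur-d+2$.

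\emph{Locality.} The $i$-th block row is a codeword of $\mathcal C^\perp$ whose support is exactly block $i$, of size $r+1$. So Lemma~\ref{lem.classical-locality-dual} gives locality $r$.

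\emph{Optimality reduction.} Since $1\le d-2<r$, we have $\lceil (ur-d+2)/r\rceil = u$. The bound in \eqref{eq.singleton-classical} therefore reads
\[
d(\mathcal C)\le u+d-2-u+2=d.
\]
Thus it suffices to prove that every nonzero codeword has weight at least $d$.

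\emph{Distance.} Suppose $\vect c\in\mathcal C$ is nonzero with support $T$, $|T|\le d-1$. Put $w_{i,j}=v_{i,j}c_{i,j}$. The parity checks say:
\begin{itemize}
\item for each block $i$, $\sum_j w_{i,j}=0$;
\item for each $t=1,\ldots,d-2$, $\sum_{i,j}w_{i,j}a_{i,j}^t=0$.
\end{itemize}
The first set of checks forces every block meeting $T$ to contain at least two positions of $T$. Hence the set $S$ of touched blocks has $|S|\le\lfloor (d-1)/2\rfloor$.

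In case 1) ($d\in\{3,4\}$), this forces $|S|=1$. Inside a single block the equations for $t=0,\ldots,d-2$ form a Vandermonde system in at most $d-1$ distinct points $a_{i,j}$. So $\vect w=\vect 0$, a contradiction.

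In case 2), I group positions by the value $\alpha=a_{i,j}$ and set $W_\alpha=\sum_{a_{i,j}=\alpha}w_{i,j}$. Summing the block equations gives $\sum_\alpha W_\alpha=0$. Together with the equations for $t=1,\ldots,d-2$, this is a Vandermonde system in at most $|T|\le d-1$ distinct points, which yields $W_\alpha=0$ for all $\alpha$.

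The key step, and the main obstacle, is to pass from $W_\alpha=0$ to $\vect w=\vect 0$. Here I use the hypothesis $|\bigcup_{i\in S}A_i|\ge r|S|+1$. It implies that the blocks in $S$ have at most
\[
|S|(r+1)-(r|S|+1)=|S|-1
\]
coincidences of evaluation points in total. So the values in $T$ shared between different blocks form a forest-like pattern with fewer than $|S|$ collisions. I would then build the bipartite graph between the touched blocks and the shared values. It has $|S|$ block-vertices and few edges, hence is acyclic, and I would peel leaves:
\begin{itemize}
\item A value occurring in only one block has $w=0$ there, by $W_\alpha=0$.
\item A block all of whose values but one are private has that last value forced to $0$ by its block-sum equation.
\end{itemize}
Inducting along the forest should kill all $w_{i,j}$. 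If this peeling argument runs into trouble, I would instead try a rank count: the $(|S|+d-2)\times|T|$ coefficient matrix has full column rank whenever the number of distinct values is at least $|T|-|S|+1$.
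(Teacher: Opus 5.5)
The paper does not prove this lemma; it quotes it from \cite{LuoChenEzermanLing2025}, so your argument has to stand on its own. Most of it does. The following steps are all correct:
\begin{itemize}
\item the rank of $\mat H$;
\item locality from the rows of $\mat V$;
\item the reduction of optimality to $d(\C)\ge d$ via $\lceil (ur-d+2)/r\rceil=u$;
\item the bound $|S|\le\lfloor(d-1)/2\rfloor$;
\item case 1);
\item the Vandermonde step giving $W_\alpha=0$.
\end{itemize}
Case 1) never uses $\vect a_1=\cdots=\vect a_u$ or $\vect v_1=\cdots=\vect v_u$. That is fine, since for $d\le4$ a codeword of weight at most $3$ touches only one block.

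The gap is in the step you flag as the crux. The claim ``$|S|$ block-vertices and few edges, hence acyclic'' does not follow, because a bound $|E|\le|V|-1$ forces acyclicity only for a connected graph. Suppose $A_1,A_2$ share exactly two values $\alpha,\beta$ and $A_3$ is disjoint from both. Then $|A_1\cup A_2\cup A_3|=3r+1$, so the count for $S=\{1,2,3\}$ holds (two coincidences, which equals $|S|-1$). Yet the block--value graph contains the $4$-cycle through blocks $1,2$ and values $\alpha,\beta$.

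Your fallback fails on the same configuration. Let $T$ consist of the positions of $\alpha,\beta$ in blocks $1$ and $2$, plus two positions in block $3$. Then $|T|=6$, $|S|=3$, and there are $4=|T|-|S|+1$ distinct values. However, the following choice is a nonzero kernel vector of the coefficient matrix:
\begin{itemize}
\item $w=1,-1$ on $(\alpha,\beta)$ in block $1$;
\item $w=-1,1$ on $(\alpha,\beta)$ in block $2$;
\item $w=0$ on block $3$.
\end{itemize}
What excludes this configuration is the hypothesis applied to the proper subset $\{1,2\}$, which your argument never invokes.

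Either of the following closes the gap.

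\emph{Repair (a): use the hypothesis on subsets.} Take the full block--value incidence graph on $S$, and let $S'$ be the block set of one of its connected components. Applying the hypothesis to $S'$ is allowed because $|S'|\le|S|\le\lfloor(d-1)/2\rfloor$. The component has $(r+1)|S'|$ edges and $|S'|+|\bigcup_{i\in S'}A_i|$ vertices. Connectivity therefore gives $|\bigcup_{i\in S'}A_i|\le r|S'|+1$. The hypothesis forces equality, so the component is a tree. The subgraph spanned by $T$ is then a nonempty forest and has a leaf. A block leaf contradicts its block sum, which would have a single nonzero term. A value leaf contradicts $W_\alpha=0$ for the same reason.

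\emph{Repair (b): skip acyclicity and count.} Let $m$ be the number of distinct values on $T$, and work in the graph spanned by $T$.
\begin{itemize}
\item Every block vertex has degree at least $2$, by the block sums.
\item Every value vertex has degree at least $2$, since $W_\alpha=0$ with nonzero summands.
\end{itemize}
Counting edges from both sides gives $2|T|\ge2|S|+2m$, that is, $|T|-m\ge|S|$. On the other hand, a value occurs at most once per block. Hence the number of coincidences inside $T$ satisfies $|T|-m\le(r+1)|S|-|\bigcup_{i\in S}A_i|\le|S|-1$. This contradiction uses the hypothesis only for the touched set $S$ itself, and it completes your proof.
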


\begin{theorem}\label{thm.grs-coset-flexible}
Let $d,u,r$ be positive integers with $r>2d-4\ge 2$, $(r+1)\mid (q-1)$, and $r+1<q-1$.  
For each integer $s$ satisfying $0\le s\le d-2$, if 
one of the following conditions holds: 
\begin{enumerate}
  \item [\rm 1)] $d\in\{3,4\}$;
  \item [\rm 2)] $d\ge5$ and $u(r+1)\le q-1$, 
\end{enumerate}
then there exists a pure $\dsb{u(r+1),ur-2d+4+s,d;u+s}_q$ EAQLRC with locality $r$, 
which is optimal with respect to the Singleton-like bound in \eqref{eq:singleton-like}. 
\end{theorem}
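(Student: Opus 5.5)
We take $\mathcal C_X$ to be the code with parity-check matrix $\mat H$ from \eqref{eq:~grs-block-check222}, with suitably chosen evaluation sets. We take $\mathcal C_Z'$ to be the code with parity-check matrix $\mat H\mat D_s$ for a nonsingular diagonal $\mat D_s$, and run the three steps of the framework in Subsection~\ref{subsec:framework}. Throughout, $n=u(r+1)$ and $k=ur-d+2$.

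\emph{Step 1: choice of $\mat H$.} Since $(r+1)\mid(q-1)$, let $\mu_{r+1}\subseteq\F_q^*$ be the subgroup of order $r+1$. Set all $\vect v_i=\vect 1_{r+1}$.
\begin{itemize}
\item If $d\in\{3,4\}$, take every $A_i=\mu_{r+1}$ with the same ordering.
\item If $d\ge5$, the condition $u(r+1)\le q-1$ lets us take $A_i=\beta_i\mu_{r+1}$ for $u$ pairwise distinct cosets. These are disjoint, so $|\bigcup_{i\in S}A_i|=|S|(r+1)\ge r|S|+1$.
\end{itemize}
By Lemma~\ref{lem.grs-block-classical-lrc}, $\mathcal C_X$ is then an optimal $[n,k,d]_q$ cLRC with locality $r$. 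Each block row $(\vect 0,\ldots,\vect 1_{r+1},\ldots,\vect 0)$ of $\mat V$ lies in both $\mathcal C_X^\perp$ and $\mathcal C_Z'^\perp$ (rescaled by $\mat D_s$) with the same support of size $r+1$. Hence the paired support condition \eqref{eq.condition} holds, and by Theorem~\ref{thm.eacss} the locality is $r$. Since $\mat D_s$ is diagonal and nonsingular, $\mathcal C_Z'$ has the same parameters as $\mathcal C_X$.

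\emph{Step 2: choice of $\mat D_s$ and computation of $c$.} The key identity is that for a coset $\beta\mu_{r+1}$ we have $\sum_{a\in\beta\mu_{r+1}}a^e=0$ unless $(r+1)\mid e$, in which case it equals $(r+1)\beta^e\ne0$; here $p\nmid r+1$ because $(r+1)\mid(q-1)$. The rows of $\mat G$ are evaluations of $x^e$ for $1\le e\le d-2$.

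With $\mat D=\mat I_n$, the matrix $\mat H\mat H^\top$ has three kinds of block:
\begin{itemize}
\item $\mat V\mat V^\top=(r+1)\mat I_u$;
\item $\mat V\mat G^\top=\mathbf 0$, since the exponents satisfy $1\le e\le d-2<r+1$;
\item $\mat G\mat G^\top=\mathbf 0$, since the exponents satisfy $2\le e+e'\le 2d-4<r+1$.
\end{itemize}
So the rank is $u$.

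To reach rank $u+s$, set $T=2d-3-s$ and $m=r+1-T$. Define $\mat D_s$ to be the identity outside the first block, and on the first block to be $\mathrm{diag}(1+\lambda a_{1,j}^m)_j$. The constant $\lambda\in\F_q^*$ is chosen so that $\lambda a^m\ne-1$ for all $a\in A_1$. This is possible because $\{a^m:~a\in A_1\}$ has at most $r+1<q-1$ elements, so some $\lambda$ avoids the at most $r+1$ forbidden values; this is where the hypothesis $r+1<q-1$ enters.

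The extra term $\lambda\sum_{a\in A_1}a^{m+\cdot}$ contributes to each block as follows:
\begin{itemize}
\item To $\mat V\mat D_s\mat V^\top$: nothing, since $m\not\equiv0\pmod{r+1}$ because $d-1\le T\le 2d-3<r+1$.
\item To $\mat V\mat D_s\mat G^\top$: nothing, since $m+e\equiv0$ would force $e=T\ge d-1$.
\item To $\mat G\mat D_s\mat G^\top$: a nonzero multiple $\lambda(r+1)\beta_1^{r+1}$ exactly at the positions with $e+e'=T$. This is an anti-diagonal Hankel pattern in the $(d-2)\times(d-2)$ block with exactly $2(d-2)-T+1=s$ nonzero entries, so its rank is $s$.
\end{itemize}
Hence $c=\rank(\mat H\mat D_s\mat H^\top)=u+s$ and $\kappa=2k-n+c=ur-2d+4+s$.

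Since $r>2d-4$, both $k=ur-(d-2)$ and $\kappa=ur-(2d-4-s)$ lie in $((u-1)r,ur]$. Therefore $\lceil k/r\rceil=\lceil\kappa/r\rceil=u$, which is condition 3) of Theorem~\ref{thm.singleton-equality}.

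\emph{Step 3: purity.} As in Theorem~\ref{thm.mds-eaqlrc-spectrum}, Lemma~\ref{lem:CSS-construction} gives $\delta\ge\min\{d_X,d_Z\}=d$. The bound \eqref{eq:singleton-like} gives
\[
2\delta\le n+c-\kappa-2u+4=2d.
\]
Hence $\delta=d$, the code is pure, and it is optimal.

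The main obstacle is Step 2: choosing $\mat D_s$ so that the rank increases by exactly $s$ without $\mat D_s$ becoming singular, and checking that no other exponent hits a multiple of $r+1$.
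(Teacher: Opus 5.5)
Your proposal is correct and follows essentially the same route as the paper. You use the same cosets with all-ones $\vect v_i$, the same block computation of $\mat H\mat D_s\mat H^\top$ giving rank $u+s$ from the $s$ anti-diagonal entries with $j+k=2d-3-s$, and the same purity argument via the Singleton-like bound. Your diagonal entries $1+\lambda a^{m}$ with $m=r+1-(2d-3-s)$ are just the paper's $1+\theta_s a^{-e_s}$ with $\theta_s=\lambda\beta_1^{r+1}$, since $a^{r+1}=\beta_1^{r+1}$ for every $a\in A_1=\beta_1G$.
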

\begin{proof}
\noindent{\bf Step 1.}
Let $G$ be a subgroup of $\F_q^*$ of order $r+1$.  
According to the prescribed integer $d$, we consider two cases.  
\begin{itemize}
    \item If $d\in\{3,4\}$, we fix an ordering
    $G=\{a_1,a_2,\ldots,a_{r+1}\}$, and take
    $A_i=G$ and
    $\vect a_i=(a_1,a_2,\ldots,a_{r+1})$ for every $i\in[u]$.
    Thus, $a_{i,j}=a_j$ for all $i\in[u]$ and $j\in[r+1]$.

    \item If $d\ge5$ and $u(r+1)\le q-1$, we choose pairwise distinct
    cosets $A_i=\beta_iG$ and fix an ordering
    $A_i=\{a_{i,1},a_{i,2},\ldots,a_{i,r+1}\}$ for each $i\in[u]$.
    Take $\vect a_i=(a_{i,1},a_{i,2},\ldots,a_{i,r+1})$ for every
    $i\in[u]$. 
    Moreover, since the sets $A_i$ are pairwise disjoint, then, for every
nonempty $S\subseteq[u]$, we have 
$
\left|\bigcup_{i\in S}A_i\right|
=(r+1)|S|
\ge r|S|+1.
$
\end{itemize} 
For either case, we further take
$
\vect v_1=\vect v_2=\cdots=\vect v_u=(1,1,\ldots,1)\in(\F_q^*)^{r+1}
$ in \eqref{eq:~grs-block-check222}, 
so that
$
\mat G_i
=
\bigl(a_{i,j}^{\,t}\bigr)_{
\substack{1\le t\le d-2,~1\le j\le r+1}}.
$
Let $\mathcal C_X=\C_Z=\C$ be the linear code with parity-check matrix $\mat H$ in 
\eqref{eq:~grs-block-check222}.  
Moreover, the inequality $r>2d-4$ implies $r>d-2$, and the assumptions on $q$ give
$r\le q-1$.  
It then follows from Lemma~\ref{lem.grs-block-classical-lrc} that
$\mathcal C_X=\C_Z=\C$ is an optimal
$[u(r+1),ur-d+2,d]_q$ cLRC with locality $r$.

For each $i\in[u]$, let $\vect b_i$ denote the $i^{\rm th}$ row of the
block-diagonal matrix $\mat V$ in \eqref{eq:~grs-block-check222}.  Then
$\supp(\vect b_i)=J_i=\{(i-1)(r+1)+1,(i-1)(r+1)+2,\ldots,i(r+1)\}$ with $|J_i|=r+1$ and 
the sets $J_1,J_2,\ldots,J_u$ form a partition of $[u(r+1)]$.
For every coordinate $j$, let $i(j)$ be the unique index such that
$j\in J_{i(j)}$ and take
$
\vect h_j^X=\vect h_j^Z=\vect b_{i(j)}.
$
It then follows that
$$
j\in\supp(\vect h_j^X)\cap\supp(\vect h_j^Z)
~{\rm and}~
\left|\supp(\vect h_j^X)\cup\supp(\vect h_j^Z)\right|=r+1.
$$
Thus, the paired support condition in \eqref{eq.condition} holds.

\noindent{\bf Step 2.}
For every integer $s$ with $0\le s\le d-2$, put
$
e_s=2d-s-3.
$
Since $r>2d-4$, we have
$
d-2<e_s\le2d-3\le r<r+1.
$
Moreover, $|A_1|=r+1<q-1$.  Hence, for each
$0\le s\le d-2$, we can choose an element 
$
\theta_s\in\F_q^*\setminus
\{-a^{e_s}:~a\in A_1\}.
$
With the ordering
$A_1=\{a_{1,1},a_{1,2},\ldots,a_{1,r+1}\}$ fixed in {\bf Step~1}, 
we define the $ (u(r+1))\times(u(r+1)) $ diagonal matrix 
\begin{align}\label{eq.D_22}
\mat D_s
=
{\rm diag}\left(
1+\theta_s a_{1,1}^{-e_s},
1+\theta_s a_{1,2}^{-e_s},
\ldots,
1+\theta_s a_{1,r+1}^{-e_s},
\underbrace{1,1,\ldots,1}_{(u-1)(r+1)}
\right).
\end{align}
In particular, the choice of $\theta_s$ guarantees that $\mat D_s$ is nonsingular.

For every coset $A_i=\beta_iG$ and every integer $e$ with
$|e|<r+1$, we have
\begin{equation}\label{eq:~coset-power-sums}
\sum_{a\in A_i}a^e
=
\begin{cases}
r+1, & e=0,\\
0,   & 0<|e|<r+1.
\end{cases}
\end{equation}
It follows from \eqref{eq:~coset-power-sums} that, for
$0\le e\le2d-4$,
\begin{equation}\label{eq:~scaled-coset-power-sums}
\sum_{a\in A_1}
\bigl(1+\theta_s a^{-e_s}\bigr)a^e
=
\begin{cases}
r+1,           & e=0,\\
\theta_s(r+1), & e=e_s,\\
0,             & \text{otherwise}.
\end{cases}
\end{equation}
When $s=0$, we have $e_0=2d-3>2d-4$, and hence the second case in
\eqref{eq:~scaled-coset-power-sums} does not occur in this range.

Since
$
\mat H=
\begin{pmatrix}
\mat V\\
\mat G
\end{pmatrix},
$
it can be checked that 
$
\mat H\mat D_s\mat H^{\top}
=
\begin{pmatrix}
\mat V\mat D_s\mat V^{\top} &
\mat V\mat D_s\mat G^{\top}\\
\mat G\mat D_s\mat V^{\top} &
\mat G\mat D_s\mat G^{\top}
\end{pmatrix}.
$
From \eqref{eq:~coset-power-sums} and
\eqref{eq:~scaled-coset-power-sums}, we deduce that 
$\mat V\mat D_s\mat V^{\top}=(r+1)\mat I_u$ and 
$\mat V\mat D_s\mat G^{\top}
=\mat G\mat D_s\mat V^{\top}=\mat 0.
$
Moreover, for $1\le j,k\le d-2$, we have 
$$
(\mat G\mat D_s\mat G^{\top})_{j,k}
=
\begin{cases}
\theta_s(r+1), & j+k=e_s,\\
0,             & \text{otherwise}.
\end{cases}
$$
There are exactly $2(d-2)-e_s+1=s$ pairs $(j,k)$ satisfying $j+k=e_s$, 
and they occupy distinct rows and columns.  
In particular, there are no such pairs when $s=0$.  Since
$r+1\mid(q-1)$, the element $r+1$ is nonzero in $\F_q$.  Therefore,
$$
\rank(\mat H_X\mat D_s\mat H_Z^{\top})
=
\rank(\mat H\mat D_s\mat H^{\top})
=u+s.
$$

Let $\mathcal C_Z'$ be the code with parity-check matrix
$\mat H_Z\mat D_s$.  
Then both $\mathcal C_X$ and $\mathcal C_Z'$ are 
optimal $[n=u(r + 1), k=ur - d + 2, d]_q$ cLRCs with locality $r$. 
Furthermore, since $r>d-2\geq 1$ and $0<2d-4-s<r$, 
we have  
$$
\left\lceil\frac{k}{r}\right\rceil
=
\left\lceil u-\frac{d-2}{r}\right\rceil
=u
~{\rm and}~
\left\lceil
\frac{2k-n+\rank(\mat H_X\mat D_s\mat H_Z^{\top})}{r}
\right\rceil
=
\left\lceil u-\frac{2d-4-s}{r}\right\rceil
=u, 
$$
implying that $\left\lceil\frac{k}{r}\right\rceil=\left\lceil
\frac{2k-n+\rank(\mat H_X\mat D_s\mat H_Z^{\top})}{r}
\right\rceil
$ holds.

\noindent{\bf Step 3.}
It remains to prove purity. 
On the one hand, it follows from Theorem \ref{thm.eacss} that 
$\mathcal{Q}(\mathcal C_X,\mathcal C_Z')$ has parameters 
$\dsb{u(r+1),ur-2d+4+s,\delta\geq d;u+s}_q$ and locality $r$. 
On the other hand, the Singleton-like bound in Theorem~\ref{thm.lrc-bounds} yields 
\begin{align*}
2\delta\le
u(r+1)+u+s-(ur-2d+4+s)-2u+4
=2d.
\end{align*}
Therefore, $\delta=d$, and the CSS-like EAQLRC $\mathcal{Q}(\mathcal C_X,\mathcal C_Z')$ is pure.

By the framework of Subsection~\ref{subsec:framework}, we conclude that  
$\mathcal{Q}(\mathcal C_X,\mathcal C_Z')$ is a pure 
$\dsb{u(r+1),ur-2d+4+s,d;u+s}_q$ EAQLRC with locality $r$ that is
optimal with respect to the Singleton-like bound in \eqref{eq:singleton-like}. 
This completes the proof. 
\end{proof}

\begin{example}\label{exam.2}
Let $(q,d,r,u,s)=(13,3,3,2,1)$, which satisfies all the conditions in
Theorem~\ref{thm.grs-coset-flexible}. 
Let $G=\langle5\rangle=\{1,5,12,8\}$, which is a subgroup of $\F_{13}^*$ of order $r+1=4$. 
Let $\vect a_1=\vect a_2=(1,5,12,8)$ and let $\vect v_1=\vect v_2=(1,1,1,1)$.  
The block matrix in \eqref{eq:~grs-block-check222} gives
$$
\mat H_X=\mat H_Z=\mat H=
\begin{pmatrix}
1&1&1&1&0&0&0&0\\
0&0&0&0&1&1&1&1\\
1&5&12&8&1&5&12&8
\end{pmatrix}.
$$
For $s=1$, we have $e_s=2d-s-3=2$.  Set $\theta_s=2$, which is admissible in
the proof of Theorem~\ref{thm.grs-coset-flexible} since 
$\{-a^2:~a\in G\}=\{1,12\}$.  This choice yields
$\mat D={\rm diag}(3,12,3,12,1,1,1,1)$ in \eqref{eq.D_22} and hence
$$
\mat H_Z'=\mat H_Z\mat D=
\begin{pmatrix}
3&12&3&12&0&0&0&0\\
0&0&0&0&1&1&1&1\\
3&8&10&5&1&5&12&8
\end{pmatrix}.
$$
Let $\C_X$ and $\C_Z'$ be the codes with parity-check matrices $\mat H_X$ and $\mat H_Z'$, respectively.  
Then both $\C_X$ and $\C_Z'$ are optimal $[8,5,3]_{13}$ cLRCs with locality $3$.  

By Theorem~\ref{thm.grs-coset-flexible}, the code
$\mathcal{Q}(\mathcal C_X,\mathcal C_Z')$ is an optimal pure
$\dsb{8,5,3;3}_{13}$ EAQLRC with locality $r=3$.

We also verify that $\dim(\mathcal C_X\cap\mathcal C_Z')=3$.  Consequently, the
upper bound in Theorem~\ref{thm.elementary-locality} gives 
$$\min\{n-1,k_X+k_Z-\dim(\mathcal C_X\cap\mathcal C_Z')\}
=\min\{7,5+5-3\}=7.$$  
Since $r=3<7$, the optimal EAQLRC $\mathcal{Q}(\mathcal C_X,\mathcal C_Z')$ has a nontrivial locality. 
\end{example}

\section{Conclusion}\label{sec:~conclusion}

In this paper, we introduced EAQLRCs in Definition~\ref{def.eaqlrc} and established a sufficient criterion for EASCs to have locality $r$ in terms of the supports of their extended stabilizers in Theorem~\ref{thm.stabilizer-locality}.
Building on this criterion, we developed a CSS-like construction of EAQLRCs in Theorem~\ref{thm.eacss} and Corollary~\ref{cor.symmetric-eacss}.
For the resulting CSS-like EAQLRCs, we derived an upper bound on their minimum locality in Theorem~\ref{thm.elementary-locality} and a Singleton-like bound in Theorem~\ref{thm.lrc-bounds}.
Both bounds apply to arbitrary CSS-like EAQLRCs, whether pure or impure.
For pure CSS-like EAQLRCs, we further established necessary and sufficient conditions for attaining equality in the Singleton-like bound in Theorem~\ref{thm.singleton-equality}.
These results led to a general framework for constructing optimal pure CSS-like EAQLRCs from pairs of cLRCs in Subsection~\ref{subsec:framework}.
Finally, applying this framework to $\ell$-intersection pairs of MDS codes and block parity-check pairs, 
we constructed two families of optimal pure CSS-like EAQLRCs with flexible parameters and nontrivial localities, 
as stated in Theorems~\ref{thm.mds-eaqlrc-spectrum} and~\ref{thm.grs-coset-flexible}, respectively.

\end{sloppypar}
\end{document}